\documentclass[authoryear]{elsarticle}
	
	\usepackage{amsmath,amssymb,amsfonts,amsthm}
	\usepackage{natbib}
	\usepackage{graphicx}
	\usepackage{endfloat}
	\bibliographystyle{plainnat}
	
\oddsidemargin  0.0in
\evensidemargin 0.0in
\textwidth      6.5in
\headheight     0.0in
\topmargin      0.0in
\textheight=9.2in
\parindent 0in
\parskip 0.2cm


\newtheorem{defi}{Definition}[section]
\newtheorem{thm}[defi]{Theorem}

\newtheorem{cor}[defi]{Corollary}

\newcommand{\by}[2]{#1 \times #2}

\newcommand{\trans}{^{\scriptscriptstyle T}}    

\newcommand{\var}{\mathrm{Var}}         
\newcommand{\cov}{\mathrm{Cov}}         


\newcommand{\ti}{t}                         
\newcommand{\comp}{c}                       
\newcommand{\loc}{l}                        
\newcommand{\lc}{r}                         
\newcommand{\X}{X}                          
\newcommand{\Y}{Y}                          
\newcommand{\CR}{\alpha}                    
\newcommand{\epix}[1]{\epsilon_{\X #1}}     
\newcommand{\epiy}[1]{\epsilon_{\Y #1}}     
\newcommand{\epithet}[1]{\epsilon_{\theta #1}}     
\newcommand{\epialp}[1]{\epsilon_{\CR #1}}  
\newcommand{\epilc}[1]{\epsilon_{\lc #1}}   
\newcommand{\epixn}{\epsilon_{\X}}          
\newcommand{\epialpn}{\epsilon_{\CR}}       
\newcommand{\E}{\xi}                          
\newcommand{\Sigy}{\Sigma_{\Y}}             
\newcommand{\thet}{\Theta}                  
\newcommand{\In}{1}                         
\newcommand{\On}{0}                         




\newcommand{\M}[1]{\mathcal{M}(#1)}         
\newcommand{\Res}[2]{\mathcal{R}_{#1}(#2)}  


\newcommand{\musigX}{\mu_{W_\X}}           
\newcommand{\varsigX}{\Sigma_{W_\X}}       
\newcommand{\musigthet}{\mu_{W_\theta}}           
\newcommand{\varsigthet}{\Sigma_{W_\theta}}       
\newcommand{\covsigthet}{\Gamma_{W_\theta}}       
\newcommand{\covsigX}{\Gamma_{W_\X}}       

\newcommand{\musigCR}{\mu_{W_\CR}}           
\newcommand{\musiglc}{\Sigma_\lc}           

\newcommand{\Dbar}{\bar{\Delta}}                 

	\title{Bayes linear variance structure learning for inspection of large scale physical systems}
	
	\author[dur]{D. Randell}
	\author[dur]{M. Goldstein}
	\author[thorn]{P. Jonathan}
	\address[dur]{University of Durham, UK}
	\address[thorn]{Shell Projects and Technology, Chester, UK}
	
	\journal{Journal of Statistical Planning and Inference}
	
	\begin{document}
	
	\begin{frontmatter}
	
	\begin{abstract}
	Modelling of inspection data for large scale physical systems is critical to assessment of their integrity. We present a general method for inference about system state and associated model variance structure from spatially distributed time series which are typically short, irregular, incomplete and not directly observable. Bayes linear analysis simplifies parameter estimation and avoids often-unrealistic distributional assumptions. Second-order exchangeability judgements facilitate variance learning for sparse inspection time-series. The model is applied to inspection data for minimum wall thickness from corroding pipe-work networks on a full-scale offshore platform, and shown to give materially different forecasts of remnant life compared to an equivalent model neglecting variance learning.
	\end{abstract}
	
	\begin{keyword}
	Bayes linear, exchangeability, variance learning, corrosion, dynamic linear model, Mahalanobis distance
	\end{keyword}
	
	\end{frontmatter}
	
	\section{Introduction}
	
	Inspection and maintenance assures the integrity of physical systems subject to degradation (e.g. to corrosion and fouling in time). Inspection and maintenance is typically costly, requiring careful allocation of limited resources. Statistical modelling provides one approach to ensuring the effectiveness of inspection and maintenance activities.
	
	Inference for complete systems consisting of thousands of components is a largely overlooked aspect of real world inspection planning and inference, due to methodological and computational complexity. Most attempts to model a degrading system empirically consider individual system components. Yet characteristics of degradation are often common across components, due to component design, age, location, manner of operation, etc. A multi-component (or system-wide) model could make use of common between-component behaviour to improve the quality of inspection information and hence achieve more efficient inspection planning. Inspections at a given time are rarely performed system-wide. A realistic empirical model for a large scale system will therefore assume partial inspection data as input.
	
	Here we develop a model for degradation of a system consisting of multiple dependent components, applicable to the analysis of irregularly spaced spatially distributed short time series. We make inferences about system state per component and model variances, given indirect observations. A Bayes linear approach simplifies parameter estimation in comparison with a full Bayesian analysis and avoids unrealistic distributional assumptions, using model-based simulation. Diagnostic tests assess model fit, and simulation studies based on a hypothetical known system are used to evaluate performance. We illustrate the method by modelling wall thickness and corrosion rates for corroding pipe-work networks on a full-scale offshore platform, given sparse inspection data for component minimum wall thickness.
	
Specification of realistic priors and initial values (e.g. \cite{oha06}), for model variances in particular (e.g. \cite{far03}), is problematic in general yet can be highly influential, particularly when data are sparse. To avoid over-reliance on poorly-specified priors and initial values, we develop and implement methods of inference for the model variance structure. We facilitate variance estimation by making exchangeability judgements appropriate for analysis for sparse time-series of irregular partial inspections.
	
	Bayes linear methods (\citet{BLS07}) applied to dynamic linear models (DLM, \citet{HW97}) offer system-wide modelling of corroding systems using partial inspection data. \citet{little1} uses a multivariate DLM to characterise the corrosion of large industrial storage tanks, using observations of component minima, and suggests approaches to optimal inspection planning. \citet{little2} describes the application of a spatio-temporal DLM to model the corrosion of an industrial furnace using Bayes linear updating. Empirical distance-based estimates for covariances of DLM observation and system variances are used, and optimal inspection planning based on heuristic criteria is considered. \cite{Far93} discusses Bayes linear methods for grouped multivariate repeated measurement studies with application to cross-over trials.  \cite{wilk97a} discusses variance learning for a univariate linear growth DLM, and \cite{wilk97b} describes Bayes linear covariance matrix adjustment for a multivariate constant DLM. \cite{sha99} discusses Bayes linear experimental design for grouped multivariate exchangeable systems. \cite{ran10} develops a utility-based criterion to assess inspection design quality, based on a linear growth DLM with variance learning.
	Industry guidelines (e.g. \citet{HS02} and \citet{ASTM}) treat the modelling of corrosion very generally, yet there is a vast body of engineering literature on this subject. \citet{zhang00} outlines mathematical expressions for initiation and evolution of different corrosion mechanisms, including pitting and cracking. \citet{kall00} discusses inspection and maintenance decisions based on imperfect inspection within a Bayesian framework, using gamma processes.  \citet{qin03} considers corrosion of steel structures, and \cite{gase01} presents a Bayesian approach using partial inspections only. A number of authors discuss the inclusion of inspection data and expert judgement within a risk-based inspection framework. For example, \citet{farb00c} presents an approach to estimating system condition for inspection planning purposes using a combination of inspection observations and expert judgement, and \cite{strb04} describes generic approaches to risk-based inspection of steel structures. \cite{kun09} presents a method for compliance sampling. Many approaches to corrosion modelling make use of methods associated with extreme value analysis (e.g.  \cite{col01}); \cite{scr96} provides a review. \cite{gle07} applies the generalised extreme value distribution (e.g. \cite{ktz00}), and \cite{lop08} the non-homogeneous Poisson process (e.g. \cite{dal03}).

A number of aspects of the current work are novel, including Bayes linear variance learning using partial observations and Mahalanobis learning for local corrosion variances. The statistical model adopted admits a non-linear observation equation. We also believe that the application of the methodology to a full-scale industrial system rather than individual components is particularly interesting and informative.

The article is presented as follows. We start by outlining the motivating application, a full-scale offshore platform, in section \ref{data}, and Bayes linear methods in section \ref{theory}. Section \ref{model} introduces the corrosion model and in particular discusses the exchangeability judgements made to accommodate irregular partial inspection data. Section \ref{BLinf} discusses Bayes linear inference, and the estimation of model variances, involving a Mahalanobis distance fitting procedure for local corrosion variance. Estimation and performance of the variance learning, in application to simulated data with the same inspection design and covariance structure as the historical inspection data, is also discussed. To facilitate future application of the method, section \ref{alg} presents a stepwise modelling procedure. The model is applied to historical data from the offshore platform in section \ref{examcorm}, and diagnostic tools for quality of model fit illustrated. Section \ref{conc} discusses findings and proposes possible future generalisations.  \ref{App:EDMW} outlines calculations used in computing the adjusted expectations of model variances. \ref{priorreal} provides prior values used in the analysis of both historical and simulated inspection data.
	
	\section{Motivating application} \label{data} \label{exampleintro}
	
	We consider inspection of a full-scale offshore platform as a motivating application. For inspection and maintenance purposes, the installation is considered as a set of corrosion circuits, each consisting of multiple components, for inspection. For the current application, we model a system of four corrosion circuits consisting of a total of $64$ pipe-work weld components. A corrosion circuit typically exhibits long sequences of connected components, often with side branches of components. The corrosion behaviour of components is influenced in part by their connectivity within a corrosion circuit (see section \ref{wallthickevo}). A corroding system may exhibit multiple corrosion circuits with similar corrosion behaviour. This application was discussed previously by \cite{ran10}.
	
When components are inspected, inspectors are most concerned with most vulnerable components since these are critical to assessing risk of system failure. Measurements are made using non-destructive (e.g. ultrasonic) inspection. The use of non-destructive testing means that operations can continue with little or no system interruption. The measurement device reports an estimate for the minimum wall thickness over a region sometimes referred to as the footprint of the device. Historical data for component minimum wall thickness, obtained during inspection campaigns for the period $1998$ - $2005$, are available. Based on the frequency of observations and the requirements for inspection planning, we select a monthly time increment for modelling. The sample period therefore consists of $83$ time points.
	
	The actual historical inspection design is given in figure \ref{siminspdes}. It is clear that inspections are typically incomplete and irregularly spaced in time. A total of 174 observations of the system are available.	The application is used in two ways in what follows. Firstly, it is used to evaluate model performance by generating simulated samples using the actual inspection scheme and realistic parameter values. Secondly, in section \ref{examcorm}, estimates for system parameters based on actual historical inspection data, and system forecasts, are produced. 	For both simulation and historical data analyses, the actual inspection design matrix is used, and identical prior specifications made, so that the analyses are comparable. 

Estimates for prior values are obtained from two sources, namely auxiliary historical data from other corrosion circuits for the same offshore installation, and judgements made by experienced inspection engineers. (A full list of prior values is given in section \ref{priorreal}.)

	\begin{figure}
	\includegraphics[width=\textwidth]{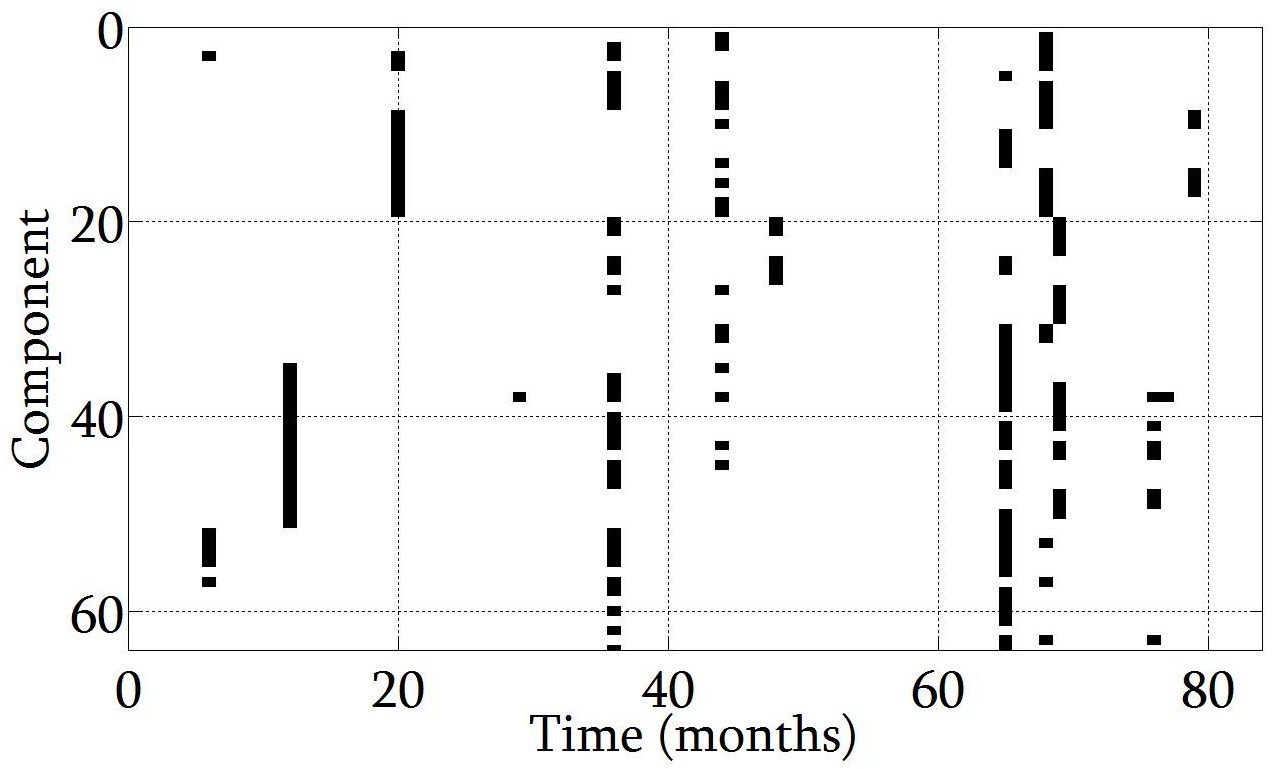}
	\caption{Inspection design for the offshore application, consisting of 64 components over 83 time points. Black lines correspond to 174 observations of the system.} \label{siminspdes}
	\end{figure}

	\section{Theoretical background} \label{theory}
	
	\subsection{Bayes linear analysis} \label{BL}
	
	For a complex system, it can be difficult or impractical to make full prior belief specifications. Bayes linear analysis allows us to specify and update partial aspects of our beliefs. Bayes linear analysis also provides a computationally efficient method for updating beliefs for problems where a full Bayes approach would be too difficult or time consuming. It can also be viewed as a generalisation of the full Bayes approach which relaxes the requirement for full probabilistic prior specifications. In Bayes linear analysis, expectation rather than probability is treated as a primitive quantity; prior beliefs are specified in terms of means, variances and covariances. A detailed explanation is given in \citet{BLS07}. Discussion of the application of these methods for the analysis of simulators for large systems is given in \cite{CompSim01} and \cite{CompSim06}.
	
Given a vector of data $D$, the adjusted expectation $E_D(B)$ for a vector $B$ is given by:
\begin{equation*}
 E_D(B)=E(B)+\cov(B,D)(\var(D))^{\dagger}(d-E(D))
\end{equation*}
where the matrix $\var(D)^{\dagger}$ is the inverse of $\var(D)$, if invertible, or a generalised inverse otherwise. The variance resolved by adjustment, $\mathrm{Rvar}_D(B)$, is given by:
\begin{equation} \label{Rvar}
\mathrm{Rvar}_D(B)=\cov(B,D)(\var(D))^{\dagger}\cov(D,B)
\end{equation}
and the adjusted variance, $\var_D(B)$, by:
\begin{equation*}
 \var_D(B)=\var(B)-\mathrm{Rvar}_D(B)
\end{equation*}

	\subsection{Exchangeability and the representation theorem } \label{exchang}
	
	The concept of exchangeable events is a crucial component of the subjective theory of probability. In essence, exchangeability judgements in a subjective analysis can be used to underpin the types of independence assumptions made in classical inference \citep{TP74}. For Bayes linear analysis, where only partial beliefs need to be specified, we can restrict our assumptions for the error structures to exchangeability of the first and second order quantities.
	
	The means, variance and covariances of a second order exchangeable sequence, $X={X_1,X_2,\dots}$ , are invariant under permutation. If we assume second-order exchangeability, we can use the the second order exchangeability representation theorem, \citep{EBS86} to express the quantities $X_i$ in the sequence in terms of the sum of two random quantities $\mathcal{M}(X)$ and $\mathcal{R}_i (X)$ which may be viewed as analogous to an underlying population mean and discrepancies from the mean respectively, as follows. Given a collection of vectors, $X={X_1,X_2,\dots}$, an infinitely second order exchangeable sequence with:
	\[ E(X_i)=\mu_X \text{, } \var(X_i)=\Sigma_X \text{ and } \cov(X_i,X_j)=\Gamma_X \quad i\neq j \]
we can express each $X_i$ as:
	\[X_i =\mathcal{M}(X)+\mathcal{R}_i (X) \]
	where $\mathcal{M}(X)$ is a random vector known as the population mean with:
	\begin{equation} \label{repthm1}
	E(\mathcal{M}(X))=\mu_X \quad \var (\mathcal{M}(X))=\Gamma_X
	\end{equation}
	and the discrepancies $\mathcal{R}_i (X)$, themselves second order exchangeable, with:
	\begin{equation} \label{repthm2}
	E(\mathcal{R}_i(X))=0 \text{ and } \var(\mathcal{R}_i(X))=\Sigma_X-\Gamma_X
	\end{equation}
	Each pair $\mathcal{R}_i$ and $\mathcal{R}_j$ is uncorrelated ($i \neq j$) and each $\mathcal{R}_i$ is uncorrelated
	with $\mathcal{M}(X)$.
	
	\subsection{Bayes linear inference and its analogy to ``full'' Bayes}
	
	Bayes linear inference can be understood by analogy to the usual ``full'' Bayes inference as follows. Suppose we observe a quantity $D$ and use it to make inferences about an unknown quantity $B$ expressed as $B= \mathcal{M} + \mathcal{R}$, where $\mathcal{M}$ and $\mathcal{R}$ are unknown random quantities with full prior distribution $f(\mathcal{M},\mathcal{R})$. The prior is related to the full posterior distribution $f(\mathcal{M}, \mathcal{R} | D)$ by Bayes theorem:
	\[  f(\mathcal{M}, \mathcal{R} | D) \propto f(D |\mathcal{M}, \mathcal{R})  f(\mathcal{M}, \mathcal{R}) \]
	where $f(D |\mathcal{M}, \mathcal{R}) $ is the likelihood for $D$. In the Bayes linear formalism, the second order exchangeability assumptions and associated representation theorem  $B= \mathcal{M} + \mathcal{R}$ provide a structure for $B$. Bayes linear adjustment is then analogous to estimation of the full posterior distribution for  $\mathcal{M}$ and  $\mathcal{R}$. The specific properties of  $\mathcal{M}$ and  $\mathcal{R}$ imposed allow inferences to be made relatively straightforwardly. The non random quantities $E( \mathcal{M})$ and $E( \mathcal{R})$, primitives in Bayes linear inference, are analogous to prior distributions. The non random adjusted expectations $E_D( \mathcal{M})$ and $E_D( \mathcal{M})$ are analogous to posterior distributions. A fuller account is given by \citet[section 3.5]{BLS07} 
	
	\section{Model} \label{model}
	
	\subsection{General framework} \label{genfrmwrk}
	
	The general framework for our analysis is as follows. We assume that the system to be inspected can be partitioned into a set of $C$ components, indexed by $\comp$, whose characteristics evolve over $T$ time points, indexed by $\ti$. We seek inferences about the true system state vector $Z_{\loc\comp\ti}$ over $L$ locations indexed by $\loc$ within each component. We separate global aspects which affect the whole component from local aspects. This allows us to distinguish between different model characteristics.
	
\subsubsection*{Global effects}
The global effects model captures the most important features of, and relationships between components. Global effects evolve in time as a dynamic linear model (DLM) with system evolution matrix $G$:
\begin{equation}\thet_\ti = G \thet_{\ti-1} +\epithet{\ti} \label{eq:lingrodlm}\end{equation}
where $\thet_\ti$ and $\epithet{\ti}$ are vectors over components with elements $\thet_{c\ti}$ and $\epithet{c\ti}$ respectively.

\subsubsection*{Local effects}
The local effects model describes spatial variability in detail. Local effects $\lc_{\loc\comp\ti}$ evolve in time for some function $g$ as:
\begin{equation}\lc_{\loc \comp \ti} = g(\lc_{\loc \comp (\ti-1)}) + \epilc{\loc \comp \ti}\label{eq:loccorrmod}\end{equation}
	
\subsubsection*{True System State}	
We model $Z_{\loc \comp\ti}$  as the sum of global and local effects:
\begin{equation}Z_{\loc \comp\ti} =F_{\comp} \thet_\ti + \lc_{\loc\comp \ti}\label{eq:trustate}\end{equation}
where $F_\comp$ is the $\comp$th row of the matrix $F$ of linear combinations of global effects parameter vector $\theta_{\ti}$.
	
\subsubsection*{Observations}
For each component and time, we can choose at a cost to observe a function $f$ of the true system state vector over locations $Z_{\comp\ti}=\left(\begin{array}{ccc} \dots & Z_{\loc\comp\ti} & \dots \end{array}\right) \trans$ with error:
\begin{equation}\Y_{\comp\ti} = f\left(  Z_{\comp \ti} + \epiy{\comp\ti} \right)\label{eq:obsproc}\end{equation}
where $\epiy{\comp\ti}$ is a vector of measurement errors across locations.  We assume that the function $f$ is non--linear and separable in the following sense. For vectors $a$ and $\boldsymbol{1}$ (a vectors ones) of the same length, and a scalar $b$, $f$ can be decomposed as:
\[f(a+b\boldsymbol{1})=f(a)+b\]
Thus:
\begin{align*}
f\left(  Z_{\comp \ti} + \epiy{\comp\ti} \right)&=f\left(  F \thet_\ti +\lc_{\loc \comp \ti} + \epiy{\comp\ti} \right)\\
&= F \thet_\ti+f\left( \lc_{\loc \comp \ti} + \epiy{\comp\ti} \right)
\end{align*}
We note that separability applies to many functions for summarising data, including the mean, the median, maximum, minimum and other quantiles.

\subsubsection*{Complete Model}
	 \begin{align*}
	\textrm{Observation Equation:}& &\Y_{\comp\ti} &= f\left(  Z_{\comp \ti} + \epiy{\comp\ti} \right)\\
		\textrm{True System State:}  &&Z_{\loc \comp\ti} &=F_{\comp} \thet_\ti + \lc_{\loc\comp \ti}\\
\textrm{Global Effects Model:}& & \thet_\ti &= G \thet_{\ti-1} +\epithet{\ti} \\
	\textrm{Local Effects Model:}& & \lc_{\loc \comp \ti} &= g(\lc_{\loc \comp (\ti-1)}) + \epilc{\loc \comp \ti}
	 \end{align*}
We assume that errors $\epithet{\ti}$, $\epilc{\loc\comp \ti}$ and $\epiy{\loc \comp \ti}$ are mutually uncorrelated in time. The elements $\epithet{\comp\ti}$ of $\epithet{\ti}$ are correlated across components with $\var(\epithet{\comp \ti})=\Sigma_\theta$, but the $\epilc{\loc \comp \ti}$ are independent across components. We assume $\var(\epilc{\loc\comp \ti})=\Sigma_\lc$ and  $\var(\epiy{\loc\comp \ti})=\Sigy$, both scalar constants. Specification of error structure in terms of second order exchangeability is discussed in section \ref{exchangassmp}.
	
\subsection{Exchangeability judgements}\label{exchangassmp}\label{wallthickevo}

Non-linearity of the observation equation makes formal Bayesian inference difficult. Coupled with the large system size and resulting difficulty of prior specification, full Bayesian calculations become intractable. This is particularly important when considering inspection design requiring fast evaluation since many samples are needed to assess each of a large number of design choices. Here, we proceed using Bayes linear methods (section \ref{BL}) in conjunction with plausible exchangeability judgements. The latter provide a means to define model variances from squared linear combinations of observations since they provide access to model evolution errors (section \ref{BLvar}). In order to invoke second order exchangeability we view the observations in time and space as part an infinite exchangeable sequence. We make the following exchangeability judgements:
	
	\subsubsection*{Judgement: Second order exchangeability of squared errors $\epithet{\comp\ti}^2$ over time for each component.} This leads to representation statements for the squared residuals of every component:
	\[\epithet{\comp \ti}^2=V_{\theta\comp \ti} = \M{V_{\theta\comp}}+\Res{\ti}{V_{\theta\comp}}\]
which can then be further decomposed.
	\subsubsection*{Judgement: Second order exchangeability of $\M{V_{\theta\comp}}$ across components
	such that:}%
	\[\M{V_{\theta\comp}}=W_{\theta\comp} =\M{W_{\theta}}+\Res{\comp}{W_{\theta}}\]
	where:
	\[E(W_{\theta\comp})=\musigthet \text{ , } \var(W_{\theta\comp})=\varsigthet \text{ and }
	\cov(W_{\theta\comp},W_{\theta\comp'})=\covsigthet \textrm{ for } \comp \neq \comp' \]
	so that:
	\[E(\epithet{\comp \ti}^2)=E(V_{\theta\comp \ti})=E(\M{V_{\theta\comp}})=E(W_{\theta\comp})=\musigthet \text{ and } \var(\M{V_{\theta\comp}})=\varsigthet\]
	Then, from equations \ref{repthm1} and \ref{repthm2} respectively:
	\[\var(\M{W_{\theta}})=\covsigthet \quad \textrm{and} \quad \var(\Res{\comp}{W_{\theta}})=\varsigthet-\covsigthet \]
	and for each $\epithet{\comp\ti}^2$ we can write:
	\begin{align}
	\epithet{\comp \ti}^2 &= \M{V_{\theta\comp}}+\Res{\ti}{V_{\theta\comp}} \nonumber\\
	&= \M{W_\theta}+\Res{\comp}{W_\theta}+\Res{\ti}{V_{\theta\comp}} \label{exchangedecomp}
	\end{align}

	\subsubsection*{Judgement: Covariance between residuals}
	We might try to learn about the global effects covariance matrix $\mathcal{S}_{\theta}$ in full generality. However in practice this proves difficult. Instead we choose to express  $\mathcal{S}_{\theta}$ in terms of known correlation $\Pi_{\theta}$ and unknown second order exchangeable variances $\M{W_\theta}$ and $\Res{\comp}{W_\theta}$. The elements $\mathcal{S}_{\theta_{\comp\comp'}}$ of the global effects covariance matrix $\mathcal{S}_{\theta}$ are given by:
		\begin{align}
	\mathcal{S}_{\theta_{\comp\comp'}}&=W_{\theta\comp}^\frac{1}{2}W_{\theta\comp'}^\frac{1}{2}\Pi_{\theta\comp\comp'}\nonumber\\
	&=(\M{W_\theta}+\Res{\comp}{W_\theta})^\frac{1}{2}(\M{W_\theta}+\Res{\comp'}{W_\theta})^\frac{1}{2}\Pi_{\theta\comp\comp'} \label{eq:sigthet}
	\end{align}
where
	\[\mathrm{corr}(\epithet{\comp \ti},\epithet{\comp' \ti'}) =\left\{\begin{array}{ll}\Pi_{\theta\comp\comp'}  & \textrm{for } \ti=\ti' \\ 0 & \textrm{for } \ti \neq \ti' \end{array} \right.\] 
	
The prior expectation $E(\mathcal{S}_{\theta_{\comp\comp'}})=\Sigma_{\theta_{\comp\comp'}}$. To estimate $\mathcal{S}_{\theta}$ using a sample, we must estimate $\M{W_\theta}$ and $\Res{\comp}{W_\theta}$. In practice, where it is too difficult to learn about $\Res{\comp}{W_\theta}$ we also treat it as known. To update beliefs about $\mathcal{S}_{\theta}$, we therefore only need to learn about $\M{W_\theta}$.
	
	\subsection{Example: Corrosion framework} \label{crrsnfrmwrk}
	
	For the motivating corrosion application (section \ref{data}), ultrasonic and radiographic inspection typically generates values for the minimum wall thickness (or maximum pit depth) corresponding to the area inspected (the ``inspection footprint''). We therefore adopt the specific model framework:
	\begin{align}
	\textrm{Observation Process:}& &\Y_{\comp\ti} &= \min_\loc\left( Z_{\comp\ti} + \epiy{\comp\ti} \right)\\
	\textrm{True System State:} & & Z_{\loc\comp\ti} &= X_{\comp \ti}+ \lc_{\loc \comp \ti} \\
	\textrm{General Corrosion Model:}& & X_{\ti} &= X_{\ti-1} +\CR_{\ti} +\epix{\ti} \nonumber\\
	 & & \CR_{\ti} &= \CR_{(\ti-1)} +\epialp{\ti}\\
	\textrm{Local Corrosion Model:}& & \lc_{\loc \comp \ti} &= \lc_{\loc \comp (\ti-1)} + \epilc{\loc \comp \ti}
	 \end{align}
	where $Z_{\comp\ti}=(Z_{\loc\comp\ti})$ and $\epiy{\comp\ti}$ are vectors over locations. With reference to section \ref{genfrmwrk}, $f$ is the minimum function over set $L$, $g$ is the identity function, and functions $F$, $G$  and parameter vector $\theta_{\ti}$ take the form:
	\begin{align*}
	  F&= \left( \begin{array}{cc}
	                 I_C & 0_C \\
	               \end{array}
	             \right)&
	\thet_\ti &=\left(\begin{array}{c}
	             \X_{\ti} \\
	             \CR_{\ti}
	           \end{array} \right) &
	  G&= \left(\begin{array}{cc}
	                 I_C & I_C \\
	                 0_C & I_C
	               \end{array}
	             \right)
	\end{align*}
where $I_C$ and $0_C$ are $C \times C$ identity and zero matrices respectively, and $X_t$ is a vector of wall thicknesses over components. We seek inferences about true system state $Z_{\loc\comp\ti}$ and corrosion rate vector $\CR_{\ti}$ over components. The error terms $\epix{\comp \ti}$, $\epialp{\comp \ti}$ and $\epilc{\loc\comp \ti}$ control evolution of system level, slope and local level adjustment respectively.
	
We make the analogous exchangeability assumptions as in section \ref{exchangassmp} so that:
	\begin{align*}
	\epix{\comp \ti}^2 &= \M{W_\X}+\Res{\comp}{W_\X}+\Res{\ti}{V_{\X\comp}} \\
	\epilc{\comp \ti}^2 &= \M{W_\CR}+\Res{\comp}{W_\CR}+\Res{\ti}{V_{\CR\comp}}
	\end{align*}
as given in equation \ref{exchangedecomp}. In principle it is possible to learn about $\epialp{\comp\ti}^2$ and $\epix{\comp\ti}^2$ separately.  In practice, given sparsity of data, this can be difficult. For simplicity we fix the ratio of mean variances:
\begin{equation}\M{W_{\CR}}) = \lambda \M{W_{\X}} \label{varratio}\end{equation}
The correlation structure of wall thickness evolution error $\Pi_{\X\comp\comp'}$ is assumed to take the form of a linear combination of three terms, reflecting an underlying universal correlation $\rho_0$ between all pairs of components (regardless of the circuit(s) to which they correspond), a circuit correlation $\rho_C$ between all pairs of components within the same circuit,  and a correlation $\rho_D$ which decays exponentially at rate $\nu>0$ with distance $s$ (measured in terms of the number of intervening components along the circuit between the components). So the covariance between components $\comp$ and $\comp'$ is:
\begin{equation}
\Pi_{\X\comp\comp'}=\rho_0+\rho_C \delta_{\comp \comp'} +\rho_D  e^{-\nu s_{\comp\comp'}} \label{eq:corrpi}
\end{equation}
where $\delta_{\comp \comp'}=1$ if components $\comp$ and $\comp'$ are in the same circuit, and $s_{\comp\comp'}$ is the distance between the components. For simplicity we assume $\Pi_\CR=\Pi_\X$ and then:
\[\Pi_\theta=\left(\begin{array}{cc} \Pi_\X & 0 \\ 0 & \Pi_\CR\end{array}\right)\]
such that $\mathcal{S}_\X$ and $\mathcal{S}_\CR$ are defined using equation \ref{eq:sigthet}. The correlation matrix $\Pi_\X$ for the offshore application is illustrated in figure \ref{corrmat}.
	
\begin{figure}
\includegraphics[width=\textwidth]{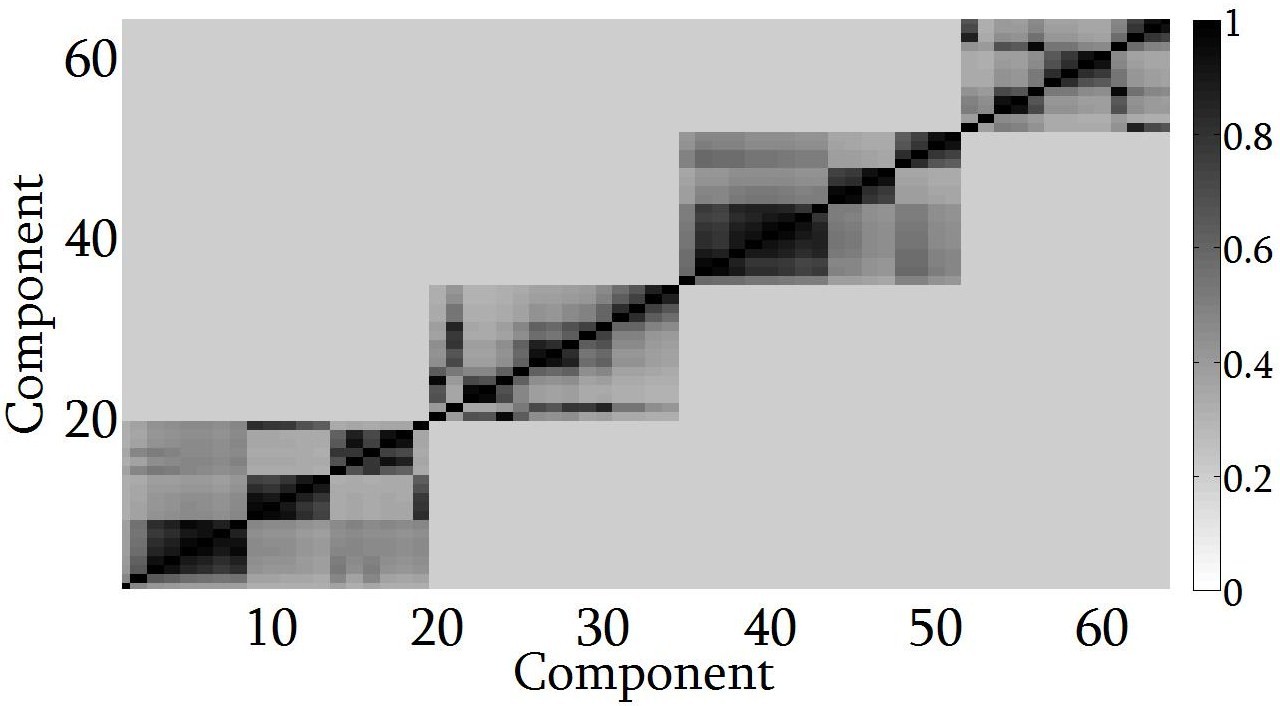}
\caption{Correlation matrix $\Pi_\X$ for the offshore application. The 4 blocks present correspond to the four corrosion circuits.} \label{corrmat}
\end{figure}
	
\section{Inference}  \label{BLinf}
	
The objective of the analysis is to learn about the characteristics of model parameters in equations 9-12. We achieve this using an iterative scheme explained in sections 5.1-5.4. We use Bayes linear mean updating in section \ref{BLmean} to learn about the true system state $Z_{\comp\ti}$ and global effects parameter vector $\theta_t$ assuming that all variance parameters in the model are known.  In section \ref{BLvar} we show how Bayes linear variance learning is use to adjust beliefs about global effects model variance parameter $\M{W_\theta}$ assuming known local effects variance parameters. In section \ref{mahal} we use Mahalanobis variance learning to estimate local effects variance parameters $\Sigma_\lc$ given updated global effects model variance $\M{W_\theta}$.  Model fit and prior assumptions are assessed using discrepancy diagnostics discussed in section 5.4.
	
\subsection{Bayes linear mean updating} \label{BLmean}
	
Given a collection of observations of one or more system components at one or more times expressed in vector form as $\Y$, we adjust beliefs about true system state $Z^{min}_{\comp\ti}$ and global effects level $\theta_{\ti}$ by calculating $E_\Y(Z^{min}_{\comp\ti})$ and $E_\Y(\theta_{\comp\ti})$:
	\begin{align}
	E_\Y(Z^{min}_{\comp\ti}) &= E(Z^{min}_{\comp \ti})+\cov(Z^{min}_{\comp \ti},\Y)[\var(\Y)]^{\dagger}(\Y-E(\Y))\label{eq:BLmeanupZ} \\
	 E_\Y(\theta_{\comp\ti}) &= E(\theta_{\comp\ti})+\cov(\theta_{\comp\ti},\Y)[\var(\Y)]^{\dagger}(\Y-E(\Y)) \label{eq:BLmeanupthet}
	\end{align}
	where $Z^{min}_{\comp\ti}=\min_l(Z_{\comp\ti})$ over locations. In a similar fashion we can calculate adjusted variances:
	\begin{eqnarray*}
	\var_\Y(Z^{min}_{\comp \ti}) &=& \var(Z^{min}_{\comp \ti})-\cov(Z^{min}_{\comp \ti},\Y)[\var(\Y)]^{\dagger}\cov(\Y,Z^{min}_{\comp \ti})\\
	 \var_\Y(\theta_{\comp\ti}) &=& \var(\theta_{\comp\ti})-\cov(\theta_{\comp\ti},\Y)[\var(\Y)]^{\dagger}\cov(\Y,\theta_{\comp\ti})
	\end{eqnarray*}
	
	Estimates for (co-)variances $\var(\Y)$, $\var(Z_{\loc\comp \ti})$, $\var(\theta_{\comp \ti})$, $\cov(Z_{\loc\comp \ti},\Y)$ and  $\cov(\theta_{\comp \ti},\Y)$ for corresponding components and times are obtained by simulation under the model using prior beliefs. Details of the simulation procedure are given in section \ref{alg}. Briefly we simulate realisations for the complete system from the model given in section \ref{model} given prior specification. Realisations are then use to calculate empirical estimates for expectations, variances and covariances used in Bayes linear adjustment. We can use simulation to run the model forward in time beyond the period of the data or to predict characteristics of unseen components in a straight forward manner.
	
\subsection{Bayes linear variance learning} \label{BLvar}
	
As illustrated in section 4.2, the Bayes linear approach may also be used for variance learning. Here, development of Bayes linear variance learning is considered for the corrosion modelling application for concreteness, since the corrosion application illustrates all key modelling features. Other applications will require modification of calculation details in general, but the methodology remains applicable. We seek expressions for squared residuals $\epix{\comp \ti}^2$ from sample data corresponding to partial system inspections with which to learn about population mean variances $\M{W_{\X}}$, and thereby $\M{W_{\alpha}}$ and $\Sigma_r$ also. Since the general corrosion DLM part of the model is invertible we can take linear combinations of observations to isolate expressions for $\epix{\comp \ti}^2$ even when observations of the system are irregularly spaced in time. Using expressions for $\epix{\comp \ti}^2$ thus obtained, we adjust our beliefs about $\M{W_{\X}}$ using observed data and assumed known values for $\Sigma_r$ as explained below. In section \ref{mahal} we present a fitting procedure using a Mahalanobis distance criterion to select an optimal combination of error variances.
	
\subsubsection*{Motivation}
	
We motivate the theorem below  by considering the case of full regularly spaced inspections, with the observation equation expressed as:
	\begin{eqnarray}
	\Y_{\comp \ti} &=& \X_{\comp \ti}+M_{\comp \ti} \label{obs_wall_thick}
	\end{eqnarray}
where:
\[M_{\comp \ti}=\min_{\loc} \left(\lc_{\loc\comp \ti} + \epiy{\loc\comp \ti}  \right)\]
consider taking differences of observations in time. Let $\Y_{\comp \ti}^{(i)}$ be the $i$th step difference for component $\comp$ at time $\ti$ defined by:
	\begin{eqnarray*}
	\Y_{\comp \ti}^{(i)}=\Y_{\comp \ti}-\Y_{\comp(\ti-i)}
	\end{eqnarray*}
with the analogous definition for $M_{\comp \ti}^{(i)}$. Then using equations 9 and \ref{obs_wall_thick}, one and two step differences are seen to be:
\begin{eqnarray*}
\Y_{\comp \ti}^{(1)}=\Y_{\comp \ti}-\Y_{\comp(\ti-1)}&=&\X_{\comp \ti}-\X_{\comp(\ti-1)}+M_{\comp \ti}-M_{\comp(\ti-1)}\\
&=&\X_{\comp(\ti-1)}+\CR_{\comp \ti}+\epix{\comp \ti}-\X_{\comp(\ti-1)}+M_{\comp \ti}^{(1)}\\
&=& \CR_{\comp \ti}+\epix{\comp \ti}+M_{\comp \ti}^{(1)}\\
&=& \CR_{\comp(\ti-1)}+\epialp{\comp \ti}+\epix{\comp \ti}+M_{\comp \ti}^{(1)}
\end{eqnarray*}
and:
\begin{align*}
\Y_{\comp \ti}^{(2)}=\Y_{\comp \ti}-\Y_{\comp(\ti-2)}&=\X_{\comp \ti}-\X_{\comp(\ti-2)}+M_{\comp \ti}^{(2)}\\
&= 2\CR_{\comp(\ti-1)}+\epialp{\comp \ti}+\epix{\comp(\ti-1)}+\epix{\comp \ti}+M_{\comp \ti}^{(2)}
\end{align*}
The linear combination $\Y_{\comp \ti}^{(2)}-2\Y_{\comp \ti}^{(1)}$ isolates the residual error structure except for the term $M_{\comp \ti}^{(2)}-2M_{\comp \ti}^{(1)}$:
	\[\Y_{\comp \ti}^{(2)}-2\Y_{\comp \ti}^{(1)}=-\epialp{\comp \ti}+\epix{\comp(\ti-1)}-\epix{\comp \ti}+M_{\comp \ti}^{(2)}-2M_{\comp \ti}^{(1)}\]
The expectation of the square of the former linear combination is found using the facts that residuals are mutually uncorrelated in time, together with the exchangeability assumptions from equation\ref{exchangedecomp} and the assumption of a fixed ratio of population variances in equation \ref{varratio}. Then:
	\begin{align}
	E(\Y_{\comp \ti}^{(2)}-2\Y_{\comp \ti}^{(1)})^2 =& E\left(\epialp{\comp \ti}+\epix{\comp(\ti-1)}-\epix{\comp \ti}+M_{\comp \ti}^{(2)}-2M_{\comp \ti}^{(1)}\right)^2\nonumber\\
	=& E\left(\epialp{\comp \ti}^2+\epix{\comp(\ti-1)}^2+\epix{\comp(\ti)}^2+(M_{\comp \ti}^{(2)}-2M_{\comp \ti}^{(1)})^2\right)\nonumber\\
	=& E\bigg( \lambda \big(\M{W_\X} + \Res{\comp}{W_\X}+\Res{\ti}{V_{\X\comp}}\big) \bigg. \nonumber\\
	&+\M{W_\X}+\Res{\comp}{W_\X}+\Res{\ti-1}{V_{\X\comp}} \nonumber\\
	& + \bigg.\M{W_\X}+\Res{\comp}{W_\X}+\Res{\ti}{V_{\X\comp}}+(M_{\comp \ti}^{(2)}-2M_{\comp \ti}^{(1)})^2 \bigg)\nonumber\\
	=& E \left((\lambda+2)\M{W_\X}\right)+E\left((M_{\comp \ti}^{(2)}-2M_{\comp \ti}^{(1)})^2 \right)\nonumber\\
	=& (\lambda+2)\musigX+E\left((M_{\comp \ti}^{(2)}-2M_{\comp \ti}^{(1)})^2\right) \label{fullinspdiff}
\end{align}

Using the corresponding result for irregular inspection, outlined in appendix \ref{Edbar}, we arrive at the following theorem.

	\begin{thm} \label{thm1}
	Consider incomplete inspection of a system yielding observations $\{ \Y_{\comp \ti^{\comp}_1}, \Y_{\comp \ti^{\comp}_2}, \dots, \Y_{\comp \ti^{\comp}_{T_\comp}} \}$ at times $\{ \ti^{\comp}_1, \ti^{\comp}_2, \dots, \ti^{\comp}_{T_\comp}\}$ for components $\comp=1,2,\dots,C$.
	\noindent The adjusted expectation $E_{\Dbar}(\M{W_\X})$ for $\M{W_{\X}}$ given $\Dbar$ is:
	
	\[E_{\Dbar}(\M{W_\X})=E(\M{W_\X}) +\cov(\M{W_\X},\Dbar)(\var(\Dbar))^{\dagger}(\Dbar-E(\Dbar))\]
\noindent	where $\Dbar$ is a $\by{C}{1}$ vector over components with elements $\Dbar_{\comp}$:
	\[ \Dbar_{\comp} = \sum_{i=3}^{T_\comp} (k_i^c Y_{\comp \ti^{\comp}_i}^{(2)}-l_i^c Y_{\comp \ti^{\comp}_i}^{(1)})^2 \]
Further, $E(\M{W_\X})=\musigX$,  $\cov(\M{W_\X},\Dbar)=1_C\trans (T_\comp-2)\covsigX$ and:
	\begin{align*}
			E(\Dbar_{\comp i})=&\frac{k_i^c l_i^c (k_i^c -l_i^c) (2 \lambda (k_i^c)^2 -2\lambda l_i^c -6 -\lambda)}{6} \musigX\\
			& +(l_i^c)^2 E((M_{\comp \ti^{\comp}_i}^{(1)})^2)+ (k_i^c)^2 E((M_{\comp \ti^{\comp}_i}^{(2)})^2)- 2k_i^c l_i^c E(M_{\comp \ti^{\comp}_i}^{(2)}M_{\comp \ti^{\comp}_i}^{(1)})
	\end{align*}
	\noindent where $k_i^c = \ti^{\comp}_i - \ti^{\comp}_{i-1}$, $l_i^c = \ti^{\comp}_i - \ti^{\comp}_{i-2}$ and $Y_{\comp \ti^{\comp}_i}^{(j)} = Y_{\comp \ti^{\comp}_i}-Y_{\comp \ti^{\comp}_{(i-j)}}$ with a similar definition for $M_{\comp \ti^{\comp}_i}^{(j)}$.
	\end{thm}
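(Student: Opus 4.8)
The adjusted-expectation formula for $E_{\Dbar}(\M{W_\X})$ is simply the definition of Bayes linear adjustment from Section~\ref{BL} with $B=\M{W_\X}$ and $D=\Dbar$, so the work lies entirely in evaluating the two ingredients $E(\M{W_\X})$ and $\cov(\M{W_\X},\Dbar)$. The value $E(\M{W_\X})=\musigX$ is read directly from the exchangeability representation of Section~\ref{exchangassmp}: since $W_{\X\comp}=\M{W_\X}+\Res{\comp}{W_\X}$ with $E(W_{\X\comp})=\musigX$, equation~\ref{repthm1} gives both $E(\M{W_\X})=\musigX$ and $\var(\M{W_\X})=\covsigX$, the latter being the identity that drives the covariance below.

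The core of the argument is an exact expansion of the trend-annihilating second difference $k_i^\comp Y_{\comp\ti^\comp_i}^{(2)}-l_i^\comp Y_{\comp\ti^\comp_i}^{(1)}$. Writing $\Y_{\comp\ti}=\X_{\comp\ti}+M_{\comp\ti}$ as in equation~\ref{obs_wall_thick} and iterating the recursions $\X_{\comp\ti}=\X_{\comp(\ti-1)}+\CR_{\comp\ti}+\epix{\comp\ti}$ and $\CR_{\comp\ti}=\CR_{\comp(\ti-1)}+\epialp{\comp\ti}$ back to the base time $\ti^\comp_{i-2}$, I would write $\X_{\comp\ti^\comp_i}-\X_{\comp\ti^\comp_{i-1}}$ and $\X_{\comp\ti^\comp_i}-\X_{\comp\ti^\comp_{i-2}}$ as explicit linear combinations of the evolution errors $\epix{\comp\tau}$ and $\epialp{\comp s}$ over the intervening (possibly unobserved) times, plus a level term in $\X_{\comp\ti^\comp_{i-2}}$ and a slope term in $\CR_{\comp\ti^\comp_{i-2}}$. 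The weights $k_i^\comp=\ti^\comp_i-\ti^\comp_{i-1}$ and $l_i^\comp=\ti^\comp_i-\ti^\comp_{i-2}$ are exactly what cause both the level and the linear-in-time slope contributions to cancel, leaving the pure error combination $\sum_\tau c_\tau\,\epix{\comp\tau}+\sum_s d_s\,\epialp{\comp s}+(k_i^\comp M_{\comp\ti^\comp_i}^{(2)}-l_i^\comp M_{\comp\ti^\comp_i}^{(1)})$ that generalises the full-inspection identity preceding equation~\ref{fullinspdiff}. Here $c_\tau$ takes the value $k_i^\comp$ on the first subinterval and $k_i^\comp-l_i^\comp$ on the second, while $d_s$ is piecewise linear in $s$ with a different slope on each subinterval.

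Both remaining claims then follow by squaring and taking expectations, using that the errors are mutually uncorrelated in time and that $\epixn$ and $\epialpn$ are uncorrelated, so that only the diagonal squared terms survive. For the expectation I substitute $E(\epix{\comp\tau}^2)=\musigX$ and $E(\epialp{\comp s}^2)=\lambda\musigX$ (from equation~\ref{exchangedecomp} with the fixed ratio~\ref{varratio} and zero-mean residuals), obtaining $E(\Dbar_{\comp i})=(\sum_\tau c_\tau^2+\lambda\sum_s d_s^2)\,\musigX$ together with the three moment terms from expanding $(k_i^\comp M_{\comp\ti^\comp_i}^{(2)}-l_i^\comp M_{\comp\ti^\comp_i}^{(1)})^2$; the cross terms between the evolution errors and $M$ vanish because $M$ involves only $\epilcn$ and $\epiyn$, which are independent of $\epixn$ and $\epialpn$. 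For the covariance I use $\cov(\M{W_\X},\epix{\comp\tau}^2)=\var(\M{W_\X})=\covsigX$ and $\cov(\M{W_\X},\epialp{\comp s}^2)=\lambda\covsigX$, so that $\cov(\M{W_\X},\Dbar_\comp)$ carries the same multiplier as the $\musigX$-coefficient of $E(\Dbar_{\comp i})$ but with $\covsigX$ in place of $\musigX$; since $\var(\M{W_\X})$ does not depend on $\comp$, summing over $i$ and over components produces the common $1_C\trans$ structure stated.

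The one genuinely laborious step, and the main obstacle, is evaluating $\sum_\tau c_\tau^2$ and $\sum_s d_s^2$ in closed form. The level sum collapses immediately to $\sum_\tau c_\tau^2=k_i^\comp l_i^\comp(l_i^\comp-k_i^\comp)$, while the slope sum reduces to two sums of consecutive squares handled by $\sum_{v=1}^{n}v^2=\tfrac16 n(n+1)(2n+1)$; after setting $m=l_i^\comp-k_i^\comp$ and collecting terms, the bracket $\sum_\tau c_\tau^2+\lambda\sum_s d_s^2$ factorises to the closed form in $k_i^\comp,l_i^\comp,\lambda$ displayed in the theorem. I would finally check this reduction against the regular case $k_i^\comp=1,\,l_i^\comp=2$, where it must return the motivating coefficient $\lambda+2$ of equation~\ref{fullinspdiff}; this check guards against slips in the piecewise weights and fixes the overall sign and normalisation.
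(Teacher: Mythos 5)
Your proposal follows essentially the same route as the paper's own derivation (given in \ref{App:EDMW}): iterate the DLM recursions back to $\ti^{\comp}_{i-2}$, observe that the weights $k_i^\comp$ and $l_i^\comp$ annihilate both the level and slope contributions, expand the surviving combination of evolution errors, and evaluate second moments via the sum-of-consecutive-squares identity together with the exchangeability representation ($E(\epixn^2)$ reducing to $\musigX$, $\cov(\M{W_\X},\epixn^2)$ to $\var(\M{W_\X})=\covsigX$, and the $\lambda$-scaling for the $\epialpn$ terms). The only presentational difference is that the paper organises the bookkeeping through the $2\times 2$ state-transition matrices and accumulated-error vectors $\E_{\comp(\cdot,\cdot)}$, whereas you carry explicit scalar weights $c_\tau$, $d_s$; your closed forms ($\sum_\tau c_\tau^2 = k_i^\comp l_i^\comp(l_i^\comp-k_i^\comp)$, the check at $k_i^\comp=1$, $l_i^\comp=2$ recovering $\lambda+2$) agree with the appendix.

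One point deserves flagging. Your covariance argument, carried through consistently, yields $\cov(\M{W_\X},\Dbar_\comp)=\bigl(\sum_{i=3}^{T_\comp} K_i\bigr)\covsigX$, where $K_i$ is the same coefficient that multiplies $\musigX$ in $E(\Dbar_{\comp i})$; this equals the stated $(T_\comp-2)\covsigX$ only if each summand of $\Dbar_\comp$ is normalised by $K_i$. The paper's appendix silently inserts exactly this factor $\tfrac{1}{K_i}$ into the covariance sum, even though the theorem (and your proposal) define $\Dbar_\comp$ without it, so the source itself is internally inconsistent here; your assertion that summing over $i$ ``produces the common $1_C\trans$ structure stated'' glosses over this and, as written, does not establish the displayed covariance. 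Either normalise $\Dbar_{\comp i}$ by $K_i$ throughout (which also rescales $E(\Dbar_{\comp i})$) or report the covariance as $\sum_i K_i\,\covsigX$; as it stands this step is the one genuine gap in your write-up, albeit one inherited from the paper.
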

	
	\begin{proof}
	$E(\M{W_\X})=\musigX$ by definition in section \ref{wallthickevo}. Derivations of expressions for $E(\Dbar)$ and $\cov(\M{W_\X},\Dbar)$ are given in appendix \ref{Edbar} and \ref{cov}
	\end{proof}
	
	\begin{cor}
	In the case of full regular inspection yielding observations $\Y_{\comp \ti}$ at times $\ti=1,2, \dots,T$ for components $\comp$, the expression for $\Dbar_{\comp}$ is:
	\[\Dbar_{\comp}=\sum_{t=3}^T \left(\Y_{\comp \ti}^{(2)}-2\Y_{\comp \ti}^{(1)}\right)^2\]
	\end{cor}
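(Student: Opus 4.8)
The plan is to obtain the corollary as a direct specialisation of Theorem~\ref{thm1} to an evenly spaced, complete observation schedule. The single structural fact driving everything is that full regular inspection fixes the observation times to be the consecutive integers $\ti^{\comp}_i = i$ for every component $\comp$, so that all components share one common schedule with $T_\comp = T$. First I would substitute $\ti^{\comp}_i = i$ into the time-gap definitions. Since consecutive observations lie one time unit apart, $k_i^c = \ti^{\comp}_i - \ti^{\comp}_{i-1} = i-(i-1) = 1$, and since observations two steps back lie two units apart, $l_i^c = \ti^{\comp}_i - \ti^{\comp}_{i-2} = i-(i-2) = 2$. Under this identification the observation-index differences $Y_{\comp \ti^{\comp}_i}^{(j)} = Y_{\comp \ti^{\comp}_i} - Y_{\comp \ti^{\comp}_{(i-j)}}$ coincide with the time-lag differences $\Y_{\comp \ti}^{(j)}$ introduced in the motivation (with $\ti = i$), because one observation index equals one time unit.

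Next I would insert $k_i^c = 1$ and $l_i^c = 2$ into the general expression $\Dbar_{\comp} = \sum_{i=3}^{T_\comp} (k_i^c Y_{\comp \ti^{\comp}_i}^{(2)} - l_i^c Y_{\comp \ti^{\comp}_i}^{(1)})^2$. The summand collapses to $(Y_{\comp \ti^{\comp}_i}^{(2)} - 2 Y_{\comp \ti^{\comp}_i}^{(1)})^2$, and relabelling the summation index $i$ as $\ti$ (legitimate precisely because $\ti^{\comp}_i = i$ and $T_\comp = T$) yields $\Dbar_{\comp} = \sum_{\ti=3}^{T} (\Y_{\comp \ti}^{(2)} - 2\Y_{\comp \ti}^{(1)})^2$, exactly the claimed form and exactly the linear combination used in the motivating derivation of equation~\eqref{fullinspdiff}.

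There is no substantive obstacle here, as the corollary is purely a matter of tracking indices; the only point demanding care is the distinction between the two superscript conventions---observation-index differences in the theorem versus time-lag differences in the motivation---which merge into one another precisely because regular inspection makes the observation index and the time coordinate coincide. As an internal consistency check I would also confirm that the expectation coefficient degenerates correctly: substituting $k_i^c = 1$, $l_i^c = 2$ into $k_i^c l_i^c (k_i^c - l_i^c)(2\lambda (k_i^c)^2 - 2\lambda l_i^c - 6 - \lambda)/6$ gives $\lambda + 2$, so that $E(\Dbar_{\comp})$ reduces to $(\lambda+2)\musigX$ plus the $M$-terms, reproducing equation~\eqref{fullinspdiff} and confirming that the specialisation is coherent with the full-inspection calculation that motivated the theorem.
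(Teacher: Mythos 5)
Your proof is correct and follows exactly the paper's own route: the paper likewise proves the corollary by setting $k_i^c=1$ and $l_i^c=2$ in Theorem~\ref{thm1} and noting that $E(\Dbar)$ then reduces to the full-inspection expression in equation~\eqref{fullinspdiff}. Your additional check that the coefficient $k_i^c l_i^c (k_i^c-l_i^c)(2\lambda (k_i^c)^2-2\lambda l_i^c-6-\lambda)/6$ collapses to $\lambda+2$ is a correct and welcome elaboration of the same argument.
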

	\begin{proof}
	Let $k^\comp_i=1$ and $l^\comp_i=2$ for each $\ti^\comp_i$ in theorem \ref{thm1} and $E(\Dbar)$ is found as in equation \ref{fullinspdiff}
	\end{proof}	
	
	\subsubsection{Corrosion Example}
For known local corrosion variance, $\Sigma_r$, we use simulation with prior beliefs to estimate $E[(M_{\comp \ti^{\comp}_i}^{(1)})^2]$, $E[(M_{\comp \ti^{\comp}_i}^{(2)})^2]$, and $E[M_{\comp \ti^{\comp}_i}^{(2)}M_{\comp \ti^{\comp}_i}^{(1)}]$ and $\var(\Dbar))^{\dagger}$. Details of simulation procedure are given in section \ref{alg}. As illustration, consider estimating $\M{W_\X}$ using $E_{\Dbar}(\M{W_\X})$ for a simulated case using the actual inspection design from section 2, with $\M{W_\X}$ set to $0.1^2$. For each of $50$ independent realisations of the systems over all components and times, we calculate $\Dbar$ at the inspection times, which is then used to calculate $E_{\Dbar}(\M{W_\X})$. The mean value of $E_{\Dbar}(\M{W_\X})$ is $0.0994^2$, with empirical $5\%$ and $95\%$ values (from simulation) of respectively $0.0872^2$ and $0.1128^2$, consistent with the known value.
	
This procedure allows variance learning for the general corrosion DLM when local corrosion variances are known. In practice this is not the case. Selection of optimal combination of local and general variance estimates consistent with data is discussed next.
	
\subsection{Mahalanobis variance learning} \label{mahal}
	
As for general corrosion in section \ref{BLvar}, we would ideally use a Bayes linear scheme to update local corrosion variances also. However the non-linear nature of the observation equation (equation 9) renders direct estimation impossible.  Instead we adopt a fitting procedure based on Mahalanobis distance (\citet{mah36}), exploiting the estimated covariance structure, to estimate combinations of local and general corrosion error variances consistent with observational data, given an estimate of measurement error variance.
	
For each of a set of $p$ candidate values for local corrosion variance $\Sigma_\lc$, we calculate $E_{\Dbar}(\M{W_\X})$ by Bayes linear adjustment. Adopting $E_{\Dbar}(\M{W_\X})$ as an updated estimate for $\M{W_\X}$, and $\lambda E_{\Dbar}(\M{W_\X})$ for $\M{W_\CR}$, we re-simulate to estimate $E(\Y)$, $\var(\Y)$ and the ratio of Mahalanobis distance to its expected values, termed the discrepancy ratio, $H$:
\begin{equation}
H = \frac{(\Y- E(\Y))\trans \var(\Y)^{\dagger} (\Y - E(\Y))}{\mathrm{rank}(\var(\Y))} \label{discratio}
\end{equation}
	as outlined  in section \ref{alg}.	We select the candidate $\Sigma_\lc$ which yields a discrepancy ratio nearest to its expected value of unity.
	\begin{figure}
	\includegraphics[width=\textwidth]{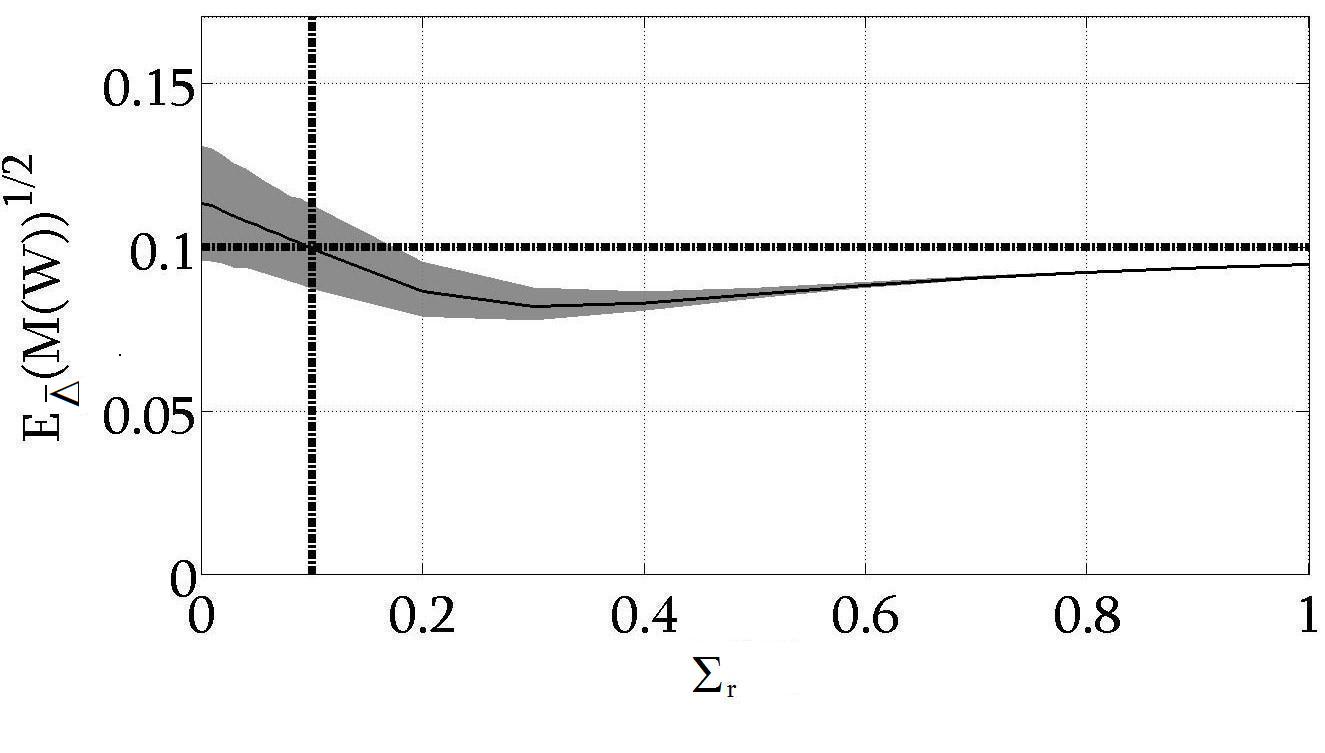}
	\caption{Bayes linear variance learning for wall thickness variance $E_{\Dbar}(\M{W_\X})$ as a function of local corrosion variance, $\musiglc$, for simulated data, shown on square-root scale. The true values of $\M{W_\X}$ (for which $E_{\Dbar}(\M{W_\X})$ is an estimate) and $\musiglc$ are both $0.1^2$, as shown by the dashed horizontal and vertical lines. The mean estimate for $E_{\Dbar}(\M{W_\X})^{1/2}$ is shown as a solid line, and the shaded region corresponds to a 90\% uncertainty band for $E_{\Dbar}(\M{W_\X})^{1/2}$ bounded by the $5$th and $95$th percentiles derived from simulation.} \label{localvsgeneralbounds}
	\end{figure}
	\begin{figure}
	\includegraphics[width=\textwidth]{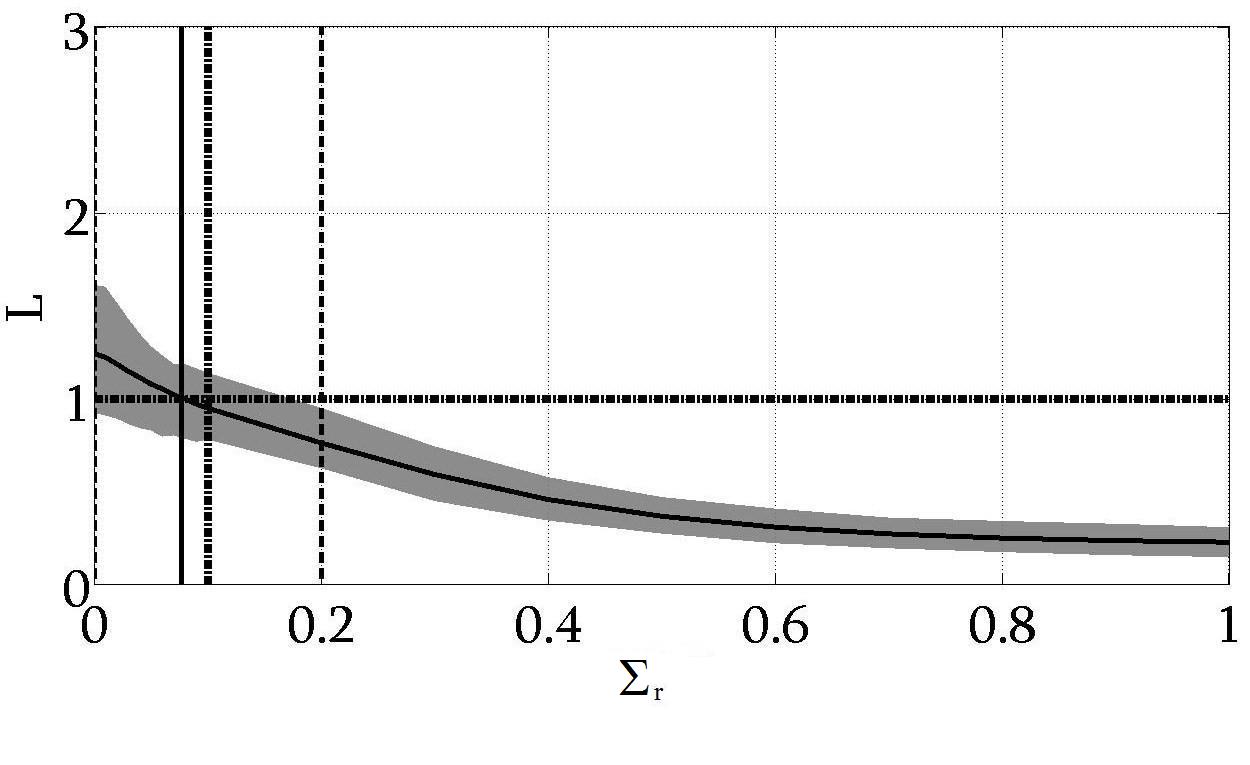}
	\caption{Discrepancy ratio, $H$, as a function of local corrosion variance $\musiglc$ for simulated data. The mean estimate for $H$ is shown as a solid line, and the shaded region corresponds to a 90\% uncertainty band for $H$ bounded by the 5th and 95th percentiles derived from simulation. The expected value of $H$ shown as a horizontal line is 1, suggesting $\musiglc=0.08^2$ and corresponding $E_{\Dbar}(\M{W_\X})=0.11^2$, (from figure \ref{localvsgeneralbounds}). Vertical lines indicate the mean (solid) and 5th and 95th percentiles (dashed) for the particular choice of $\musiglc$ in individual realisations. The true values of $\M{W_\X}$ and $\musiglc$ are both $0.1^2$.}\label{Mahalanobisbounds}
	\end{figure}
	We illustrate joint Bayes linear and Mahalanobis variance learning for the simulated example from section 5.2.1 in figures \ref{localvsgeneralbounds} and \ref{Mahalanobisbounds}. Figure \ref{localvsgeneralbounds} shows Bayes linear variance learning for wall thickness variance $E_{\Dbar}(\M{W_\X})$ as a function of local corrosion variance $\musiglc$. The simulated example has true values of $\M{W_\X}$ and $\musiglc$ both $0.1^2$. Figure \ref{Mahalanobisbounds} shows the discrepancy ratio $H$ as a function of local corrosion variance $\musiglc$ for the same example. Comparison with the expected value of $H$, shown as a horizontal line, suggests the estimate $\musiglc=0.08^2$. Figure \ref{localvsgeneralbounds} then provides the corresponding estimate $E_{\Dbar}(\M{W_\X})=0.11^2$.
	
	\subsection{Diagnostics} \label{diagdis}
	Diagnostics are essential to confirm the adequacy of model fit. Using Mahalanobis distance, an adjustment discrepancy $\mathrm{Dis}_Y(X)$ (see page 122 of \citet{BLS07}) for adjusted expectations can be calculated:
	\[\mathrm{Dis}_Y(X)=(E_{Y}(X)-E(X))\trans\mathrm{RVar}_Y{X}^{-1}(E_{Y}(X)-E(X))\]
where $X$ is a vector of wall thicknesses for all components and times and $\mathrm{RVar}_Y{X}$ is defined in equation \ref{Rvar}. Analogously, $\mathrm{Dis}_Y(Z)$ and $\mathrm{Dis}_Y(\CR)$ can be calculated for vectors $Z$ and $\CR$ over all components and times. We can also compute discrepancy directly on the data (see page 100 of \cite{BLS07}):
	\begin{equation}\mathrm{Dis} (Y)=(Y-E(Y))\trans\mathrm{Var}(Y)^{-1}(Y-E(Y)) \label{eq:Disy}\end{equation}
	The expected value of both $\mathrm{Dis}_Y(X)$ and $\mathrm{Dis} (Y)$ is unity. Forms for both discrepancies can also be evaluated for subsets of variables, useful, e.g., to spot individual outliers. Discrepancies are calculated on the prior specification, adjusted expectations of global effects terms and, subsequent to learning about local effect variance parameters, quantifying model adequacy at each stage. General thresholds for discrepancy measures do not exist although as a heuristic the three sigma rule (\citealt{Puk94}) states that for any uni--modal continuous random quantity, $X$, $P(|X-E(X)| \leq 3\sqrt{\var(X)}) \geq 0.95$.
	\begin{figure}
	\includegraphics[width=\textwidth]{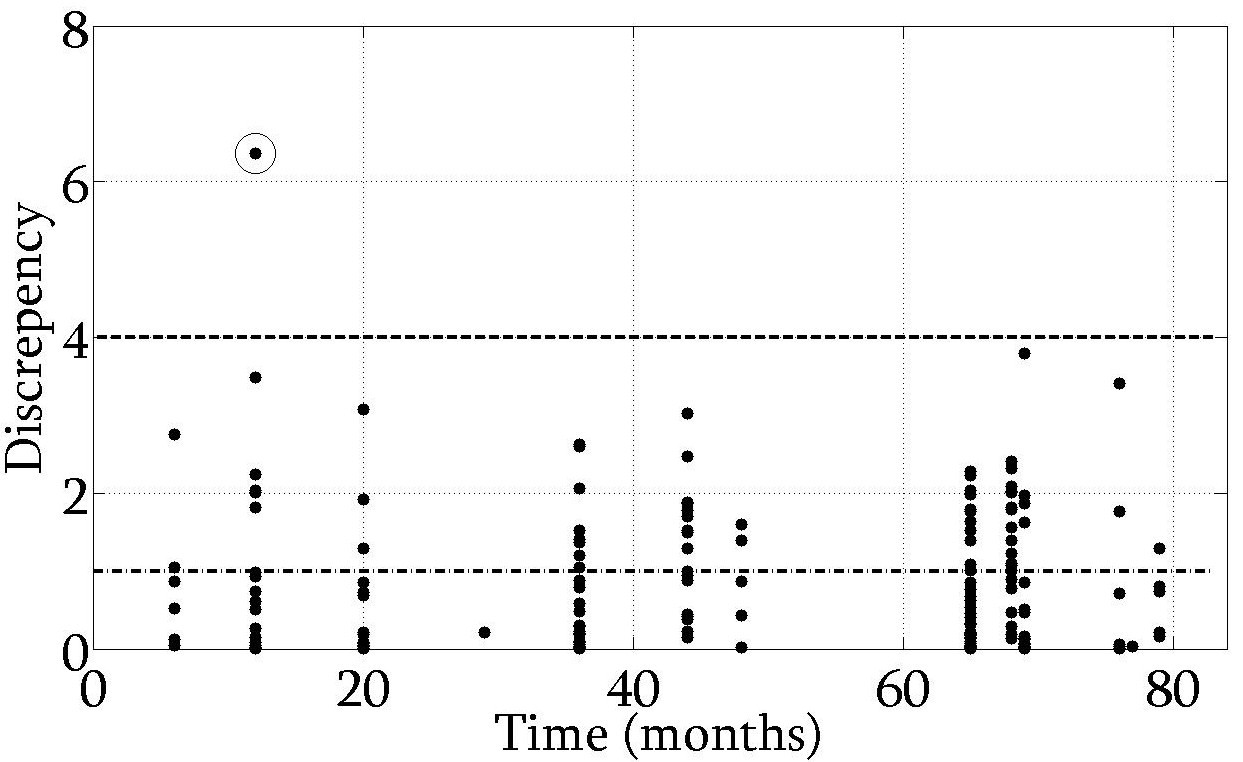}
	\caption{Component-wise discrepancy $\mathrm{Dis}(Y)$  for a typical realisation from the simulated data. The expected value of $\mathrm{Dis}(Y)$ is unity, shown as a horizontal line. Also shown is the horizontal line corresponding to $|1-\mathrm{Dis}(Y)|=3$, serving as a warning limit for unusually large values of discrepancy. In this realisation, only one observation exceeds the warning limit. } \label{DataDiscrp2}
	\end{figure}
	Figure \ref{DataDiscrp2} shows $\mathrm{Dis}(Y)$ for a realisation of the simulated data discussed in sections 5.2 and 5.3. The expected value of $\mathrm{Dis}(Y)$ is unity, shown as a horizontal line. In this realisation, only one observation exceeds the warning limit suggesting that prior specification and data are consistent. More general diagnostics for variance learning and overall model fit can be specified similarly.
	
\section{Inference procedure} \label{alg}
	
The procedure to apply the corrosion model (section \ref{crrsnfrmwrk}) is explained below. Firstly we make prior specification and carry out simple diagnostic checks to confirm consistency of the data and priors. We then update model (co-)variances, before updating means. Finally we calculate diagnostics to assess model fit.
	
To simulate from the model, distributional assumptions are necessary below (indicated using notation $\mathrm{iid}(.)$. It is advised that a variety of different distributional forms be examined to enable an informed choice about (co-)variance specification. Plausible distribution will generally lead to robust estimates of low order moments, even if full probability specification would not be appropriate using these distributions.
	
	\begin{enumerate}

	\item Specify prior and starting values to be examined
	\begin{itemize}
	\item wall thickness variances prior, $\musigX$, $\varsigX$ and $\covsigX$ (see section \ref{exchangassmp}), and $\Pi_{X_{\comp\comp'}}$ (see equation \ref{eq:corrpi})
	\item corrosion variance variance prior, $\lambda$, $\Pi_{\CR_{\comp\comp'}}$ (see section \ref{crrsnfrmwrk})
	\item measurement error variance $\Sigy$ (see section \ref{genfrmwrk})
	\item candidate values for $p$ local corrosion variances $\Sigma_\lc$ to consider (see section \ref{genfrmwrk})
	\item starting values over components for $\X_t$ and $\alpha_t$ at $t=0$ (see section \ref{crrsnfrmwrk})
	\end{itemize}
	
	\item For each of $p$ choices of local corrosion variance $\Sigma_\lc$ (see section \ref{genfrmwrk})
	\begin{enumerate}
		\item \label{simvar} \label{step2} Evaluate variance matrices using simulation under prior specification 
    	\begin{itemize}
	\item simulate $\mathcal{S}_\X$ from $\mathrm{iid}(\musigX,\varsigX, \covsigX,\Pi_{X_{\comp\comp'}})$
	\item simulate $\mathcal{S}_\CR$ from $\mathrm{iid}(\lambda\musigX,\lambda^2\varsigX, \lambda^2\covsigX,\Pi_{\CR_{\comp\comp'}})$
	\end{itemize}	
	\item Run model specified in section \ref{crrsnfrmwrk} forward from $t=0$ \label{simvar2}\label{step3}
	\begin{itemize}
	\item $\epialp{.\ti} \sim \mathrm{iid}(0,  \mathcal{S}_\CR)$
	\item $\CR_{.\ti} = \CR_{.\ti-1}+\epialp{.\ti}$
	\item $\epix{.\ti} \sim \mathrm{iid}(0,  \mathcal{S}_\X)$
	\item $\X_{.\ti} = \X_{.\ti-1}+\CR_{.\ti}+\epix{.\ti}$
	\item $\epilc{..\ti} \sim \mathrm{iid}(0,  \Sigma_\lc)$
	\item $\lc_{. .\ti} = \lc_{. .\ti-1}+\epilc{. .\ti}$
	\item $\epiy{..\ti} \sim \mathrm{iid}(0,  \Sigma_\Y)$
	\item $\Y_{. .\ti} =\min_{\loc}\left( \X_{.\ti}+\lc{. .\ti-1}+\epiy{. .\ti} \right)$
	\end{itemize}
	
	\item From simulation calculate sample estimates $E(\Y)$ and $\var(\Y)$ over $N$ realisations and check consistency with prior by computing discrepancy $\mathrm{Dis}(\Y)$ (see equation \ref{eq:Disy})
		\item Calculate $\Dbar$ and $\var(\Dbar)$  over $N$ realisations and calculate $E_{\Dbar}(\M{W_\X})$ for each $\Sigma_\lc$ (see theorem \ref{thm1})
		
		\end{enumerate}
		
	\item Select optimal $\Sigma_\lc$ from the set of $p$ candidates as that which yields the discrepancy ratio (see equation \ref{discratio}) nearest to unity
	\item Simulate to update wall thicknesses $\X_\ti$ and corrosion rates $\CR_\ti$ for any $\ti$ of interest
	\begin{itemize}
		\item repeat step \ref{simvar} and \ref{simvar2} using optimal $\Sigma_\lc$ and $\musigX=E_{\Dbar}(\M{W_\X})$ for $N$ realisations
		\item calculate sample estimates from simulation for $E(Z^{min}_{\comp\ti})$, $E(\CR_{\comp\ti})$, $\cov(Z^{min}_{\comp\ti},\Y)$,$\cov(\CR_{\comp\ti},\Y)$, $\var(\Y)$ and $E(\Y)$
		\item calculate adjusted expectations from true wall thickness $E_\Y (Z^{min}_{\comp\ti})$ and corrosion $E_\Y (\CR_{\comp\ti})$ (see section \ref{BLmean}), for components and times of interest. This could include forecasting forward in time or prediction of components not observed
	\end{itemize}		
\item Calculate discrepancy ratio $H$ (see equation \ref{discratio}) for estimated model variables to confirm consistency.
	\end{enumerate}

	\section{Application} \label{examcorm} \label{diag}
	
	Having illustrated model performance for simulated data, we now apply it to actual historical inspection data for the offshore platform, using the procedure described in section \ref{alg}. The first step of the analysis is specification of prior values. Recalling that the inspection design (figure \ref{siminspdes}) and system prior specification (section \ref{exampleintro}) corresponding to the historical data were also used for simulated data, results equivalent to those in figures \ref{localvsgeneralbounds}, \ref{Mahalanobisbounds} and \ref{DataDiscrp2} are shown in figures \ref{realDataDiscrp2}, \ref{reallocalvsgeneralbounds} and \ref{realMahalanobisbounds} below.
	
	We assess the prior discrepancy of observations as described in section \ref{diagdis}. To achieve this, realisations under the model are generated as described in steps \ref{step2} and \ref{step3} of section \ref{alg}. Discrepancies are shown in figure \ref{realDataDiscrp2}, to be compared with figure \ref{DataDiscrp2} for a single system realisation. With one exception, historical inspection data are consistent with prior expectations. At time point $11$, corresponding to the first observations of components $35-51$, all in a particular corrosion circuit, high values of discrepancy may reflect poor prior specification for that circuit at that time.
	\begin{figure}
	\includegraphics[width=\textwidth]{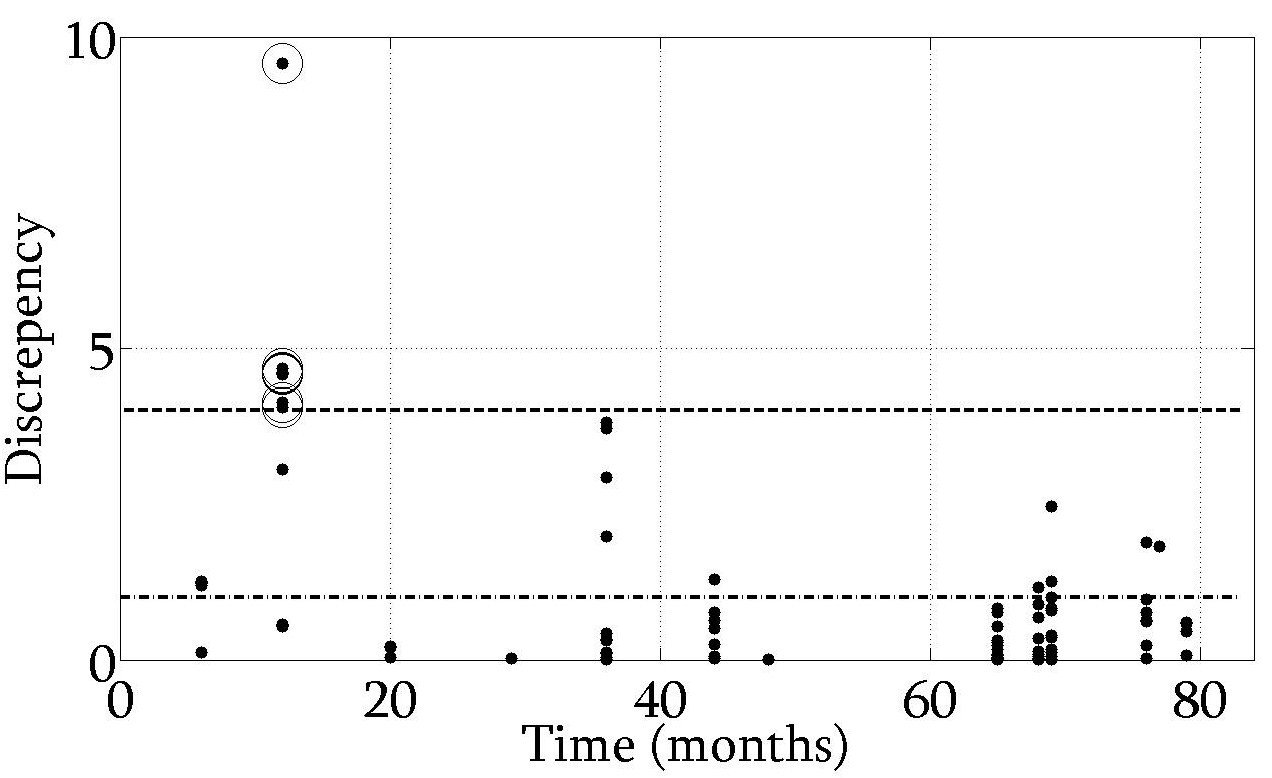}
	\caption{Component-wise discrepancy $\mathrm{Dis}(Y)$  for the historical inspection data. The expected value of $\mathrm{Dis}(Y)$ is unity, shown as a horizontal line. Also shown is the horizontal line corresponding to $|1-\mathrm{Dis}(Y)|=3$, serving as a warning limit for unusually large values of discrepancy. At time point $11$, corresponding to the first observations of components $35-51$, all in a particular corrosion circuit, high values of discrepancy may reflect poor prior specification for that circuit at that time point.} \label{realDataDiscrp2}
	\end{figure}
	Starting values for wall thickness $X_0$ and corrosion $\CR_0$ were specified using historical data. We evaluate adjusted expectation for wall thickness variance $E_{\Dbar}(\M{W_\X})$ as described in theorem \ref{thm1} for different local corrosion variance $\musiglc$ using variance learning (section \ref{BLvar}). Results are given in figures  \ref{reallocalvsgeneralbounds} and \ref{realMahalanobisbounds} (analogous to figures \ref{localvsgeneralbounds} and \ref{Mahalanobisbounds} for the simulated data). The value of $0.1^2$ for $E_{\Dbar}(\M{W_\X})$, to be used as an estimate for prior wall thickness variance $\M{W_\X}$ is the same as its prior value (\ref{priorreal}). However, the updated value of $0.4^2$ for $\musiglc$ is considerably larger than its prior ($0.1^2$). Wall thickness forecasts and inspection designs derived using updated estimates for corrosion variances will be different in general to those based on constant (prior) variances.
	\begin{figure}
	\includegraphics[width=\textwidth]{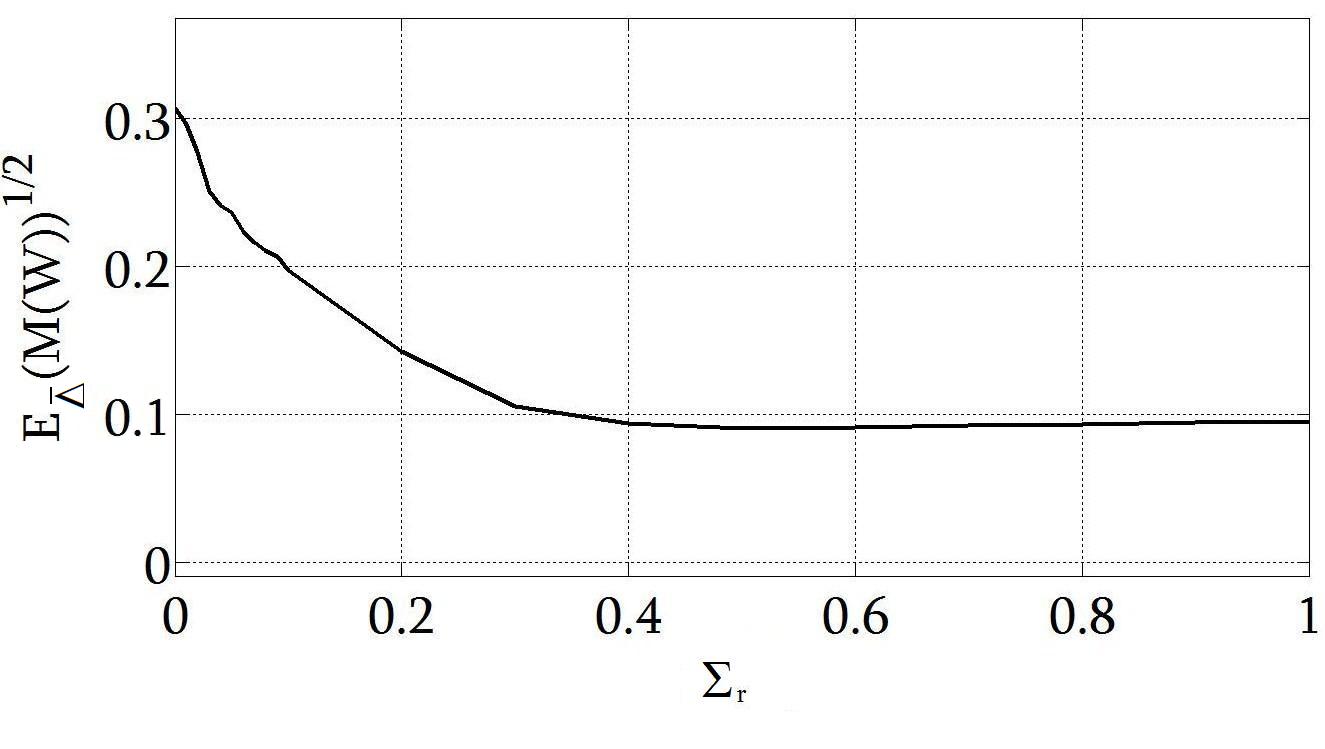}
	\caption{Bayes linear variance learning. $E_{\Dbar}(\M{W_\X})$ as a function of local variance $\musiglc$ for the historical inspection data, shown on square-root scale.}\label{reallocalvsgeneralbounds}
	\end{figure}
	\begin{figure}
	\includegraphics[width=\textwidth]{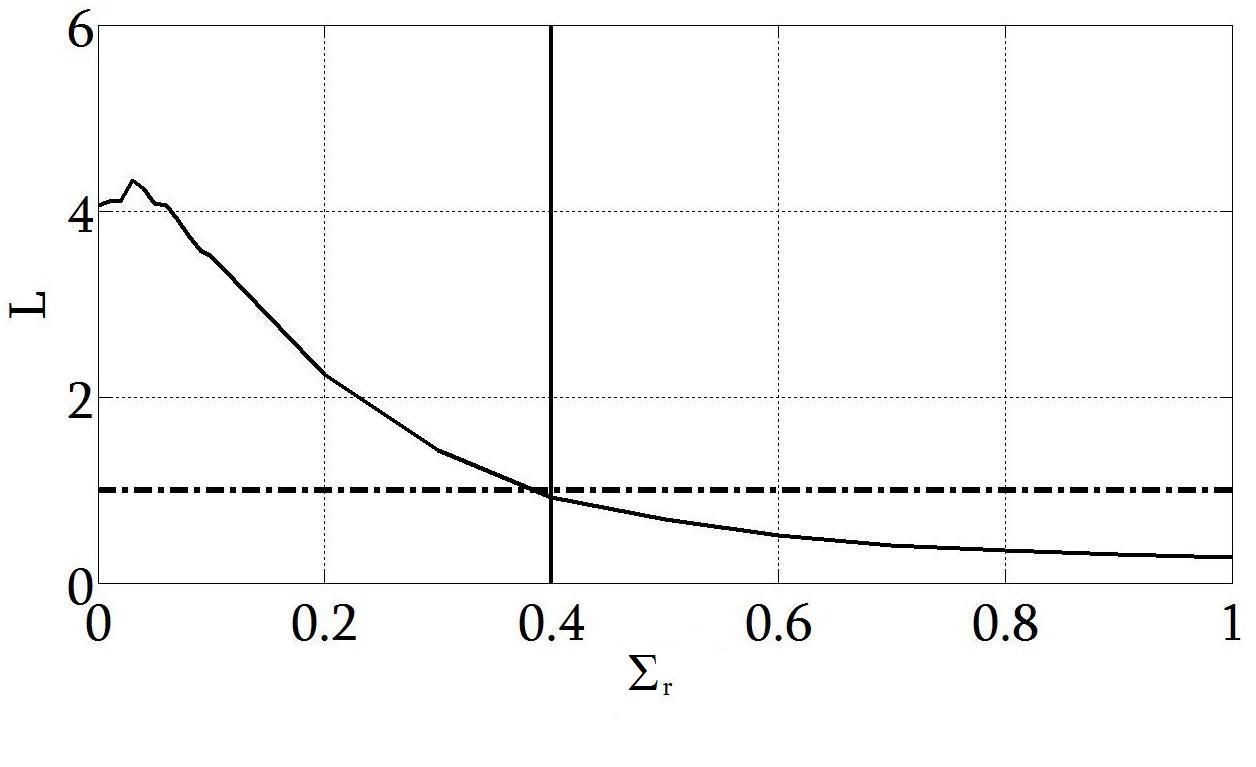}
	\caption{Discrepancy ratio, $H$, as a function of local corrosion variance $\musiglc$ for the historical inspection data. Figure suggests local corrosion variance $\musiglc=0.4^2$ and corresponding adjusted general corrosion variance  $E_{\Dbar}(\M{W_\X})=0.1^2$ (from figure \ref{reallocalvsgeneralbounds}).} \label{realMahalanobisbounds}
	\end{figure}
	
	\begin{figure}
	\includegraphics[width=\textwidth]{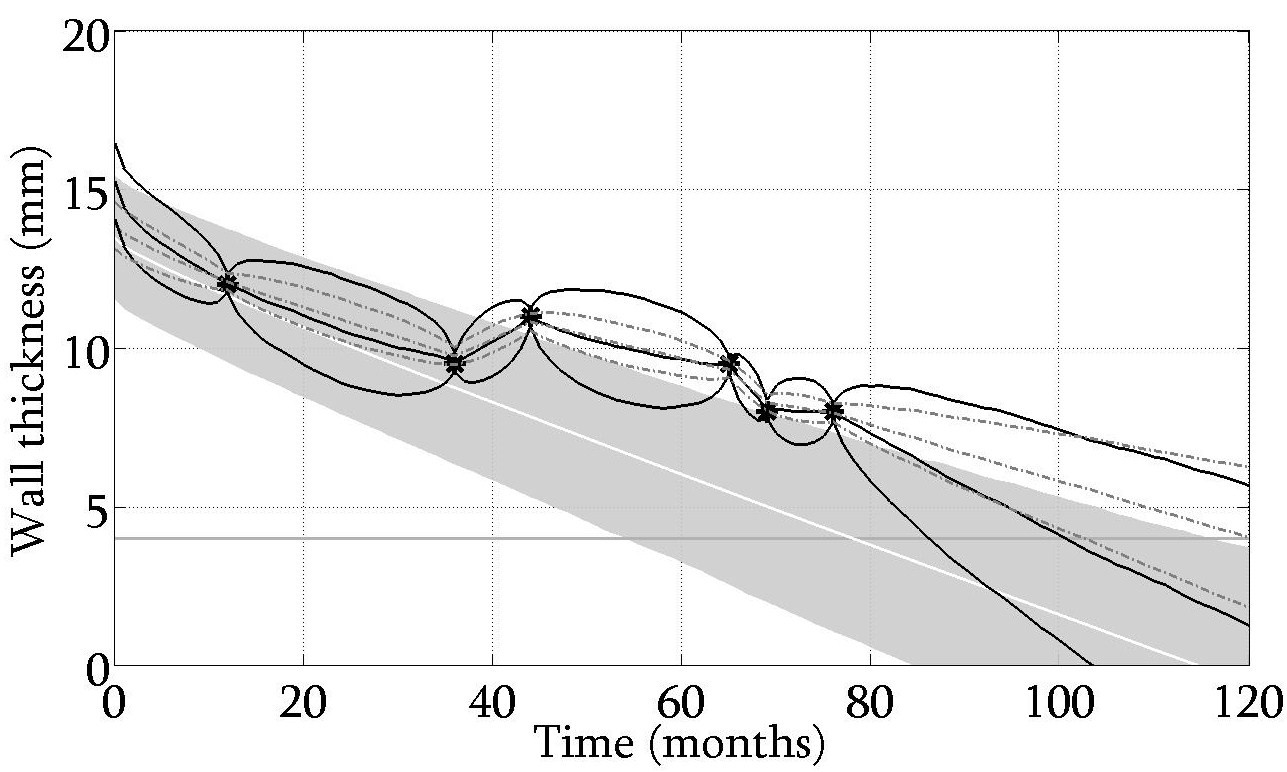}
	\caption{Effect of incorporating variance learning for a single component of the offshore platform. Critical wall thickness is shown as a horizontal dotted line at $4$mm. Actual inspections of the component are shown as black circles. Grey shaded area corresponds to a $95\%$ uncertainty bands from simulation for underlying wall thickness based on prior beliefs (with mean shown as a white line). Light grey broken lines correspond to adjusted expectation $E_Y(Z^{min}_{\comp\ti})$ and $95\%$ uncertainty bands for underlying wall thickness based on Bayes linear adjusted variance $\var_Y(Z^{min}_{\comp\ti})$ without variance learning. Solid black lines correspond to the same adjusted values following a variance learning step.} \label{methodscombined}
	\end{figure}
	The forward model together with parameter estimates from historical inspection data allows forecasting of future system state. Figure \ref{methodscombined} illustrates the effect of incorporating variance learning for a single component of the offshore platform. The critical wall thickness corresponding to component failure is shown as a horizontal dotted line at $4$mm. Actual inspections of the component are shown as black circles. The grey shaded area corresponds to a $95\%$ credible interval for underlying wall thickness based on simulation using prior beliefs (with mean shown as a white line). Light grey broken lines correspond to the adjusted expectation $E_Y(Z^{min}_{ct})$ and $95\%$ credible interval for underlying wall thickness based on Bayes linear adjusted variance $\var_Y(Z^{min}_{ct})$. Solid black lines correspond to the same adjusted values following a variance learning step. Since variance learning inflates local corrosion variance, credible intervals incorporating variance learning are wider. The rate of wall thickness loss, and the expected time to crossing of critical wall thickness are also affected by the incorporation of variance learning. Notwithstanding large uncertainties, in this application, variance learning suggests that component life may be approximately $20$ months shorter than otherwise anticipated.
	
Figure \ref{methodscombined2} illustrates the effect of incorporating variance learning for 3 components of the offshore platform, one of which is not directly observed. The first component is the same component as shown in figure \ref{methodscombined}. The grey shaded area corresponds to a $95\%$ uncertainty interval for underlying wall thickness based on simulation using prior beliefs (with mean shown as a white line); these are the same for each component. Light grey broken lines correspond to the adjusted expectations $E_Y(Z^{min}_{ct})$ with $95\%$ uncertainty intervals based on Bayes linear adjusted variance $\var_Y(Z^{min}_{ct})$. Solid black lines correspond to the same adjusted values following a variance learning step.  The effect of the correlation between components can be seen in changes exhibited by adjusted wall thickness for components at times with no inspections. The third component shows the effect on a component which is not directly observed. In the case of mean updating (without variance learning), the corrosion rate is reduced, increasing predicted remnant life. In the case of variance updating and mean updating, the local corrosion rate is increased and the corrosion rate is reduced. The net effect is to decrease predicted remnant life.

\begin{figure}
	\includegraphics[width=\textwidth]{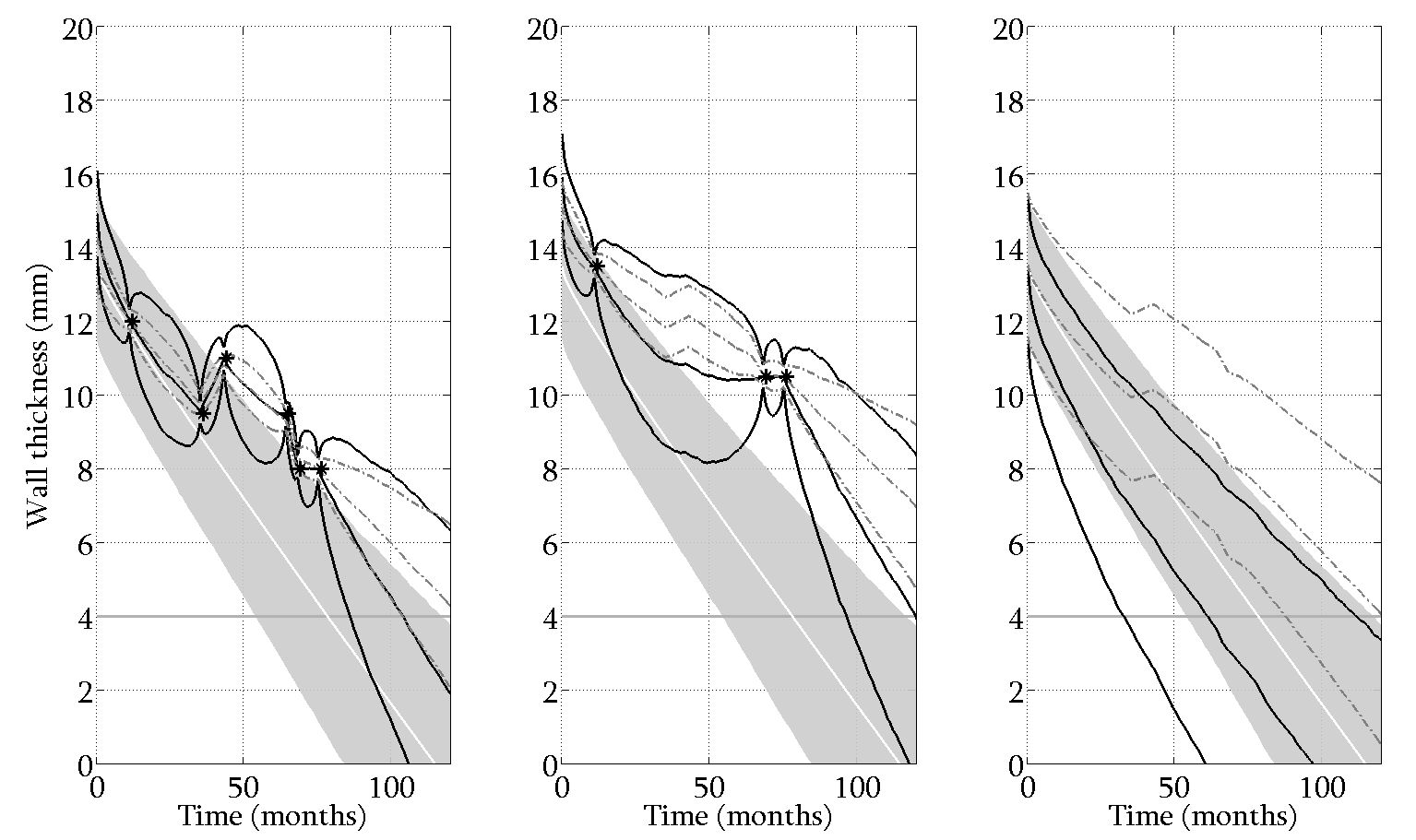}
\caption{Effect of incorporating variance learning for 3 components of the offshore platform, one of which is not directly observed. Critical wall thickness is shown as a horizontal dotted line at $4$mm. Actual inspections of components are shown as black stars. Grey shaded areas corresponds to $95\%$ uncertainty bands from simulation for underlying wall thickness based on prior beliefs (with mean shown as a white line). Light grey broken lines correspond to adjusted expectations $E_Y(Z^{min}_{\comp\ti})$ with $95\%$ uncertainty bands based on Bayes linear adjusted variance $\var_Y(Z^{min}_{\comp\ti})$ without variance learning. Solid black lines correspond to the same adjusted values following a variance learning step.}
\label{methodscombined2}
\end{figure}

\section{Discussion} \label{conc}
	
We present a general framework for modelling the evolution of a large scale physical system consisting of multiple dependent components. Sample data consist of short time series of irregularly-timed partial inspections across components. Realistic modelling requires good specification of priors and initial values, particularly if data are sparse. Simulation and Bayes linear analysis provide knowledge of the full system given partial prior specification, considerably more simply than full Bayesian inference. Given second order exchangeability judgements, variance structure learning for short time series is achieved using linear combinations of observations to access squared residuals where possible. When the observation equation involves a non-linear function a Mahalanobis fitting procedure can be used estimate local effects variance.
	
For the model form given in section \ref{model}, once distributional assumptions are made, it is straight forward to simulate realisations to estimate means, variances and covariances. 

Second order exchangeability judgements used are weaker than independence assumptions typically made in classical inference. Exchangeability across components allows us to exploit large numbers of components for variance structure learning, particularly advantageous when data are sparse. Diagnostic tests assess model fit and validity of exchangeability judgements. Specific knowledge of system characteristics allow a more detailed prior specification, e.g. partitioning the system into sets of exchangeable components, or specifying a parametric form for certain variances in time. With sufficient data, between-component exchangeability judgements may not be necessary. Separate inferences about variances of individual components could be made. We illustrate the method in application to corrosion modelling of a full scale offshore facility. Forecasts of remnant life and its uncertainty incorporating variance learning are shown to be materially different to those obtained without variance learning.
	
Various possibilities for generalisations of the model exist. For example, the corrosion model ignores corrosion initiation, which could be incorporated using an appropriate (e.g point process) representation. For applications other than to corrosion, alternatives to the (separable) minimum function in the observation equation and DLM structure will require tailored linear combinations to access squared residuals required for variance structure learning. Nevertheless, the strategy of Bayes linear variance learning, where possible, supplemented by Mahalanobis learning, will be applicable.
	
\section*{Acknowledgement} \label{ack}
	
This work was supported under an UK Engineering and Physical Sciences Research Council CASE studentship and Knowledge Transfer Network award in conjunction with Shell Projects and Technology. The authors acknowledge discussions with Fred Hoeve, Rakesh Paleja and Seiger Terpstra at Shell.  
	

	\begin{appendix}
	
	\pagebreak
	
	\section{Calculations for $E_D(\M{W_\X})$} \label{App:EDMW}
	
	\subsection{$E[\Dbar]$} \label{Edbar}
	The general corrosion DLM (equation \ref{eq:lingrodlm}) can be rewritten:
	\[
	\left( \begin{array}{c}
	\X \\
	\CR\end{array} \right)_{\comp \ti} = \left(\begin{array}{cc}\In & k \\ \On & \In \\  \end{array}\right)\left( \begin{array}{c}
	\X \\
	\CR\end{array} \right)_{\comp(\ti-k)}+\sum_{i=0}^{k-1} \left(\begin{array}{cc}\In & i \\\On & \In \end{array}\right) \left( \begin{array}{c}
	\epixn+\epialpn \\ \epialpn \end{array} \right)_{\comp(\ti-i)}
	\]
	then
	 \begin{align*}
	 \Y^{(1)}_{\comp\ti^\comp_i} &=\Y_{\comp\ti^{\comp}_i}-\Y_{\comp(\ti^{\comp}_i-k_i^c)} \\
	&=\left(\begin{array}{cc}\In & \On \end{array}\right)\X_{\comp\ti^{\comp}_i}-\left(\begin{array}{cc}\In & \On \end{array}\right)\X_{\comp(\ti^{\comp}_i-k_i^c)}+M _{\comp\ti^\comp_i}^{(1)} \\
	&= \left(\begin{array}{cc}\On & k_i^c \end{array}\right)\X_{\comp(\ti^{\comp}_i-l_i^c)}+\left(\begin{array}{cc}\On & k_i^c \end{array}\right)\E_{\comp(\ti^{\comp}_i-k_i^c,\ti^{\comp}_i-l_i^c)}\\
	&+\left(\begin{array}{cc}\In & \On \end{array}\right)\E_{\comp(\ti^{\comp}_i,\ti^{\comp}_i-k_i^c)}+M _{\comp\ti^\comp_i}^{(1)}
	\end{align*}
	where:
	\[
	\E_{\comp(\ti^{\comp}_i,\ti^{\comp}_i-k_i^c)}=\sum_{i=0}^{k_i^c-1} \left(\begin{array}{cc}
	\In & i \\
	\On & \In \end{array}\right) \left( \begin{array}{c}
	\epixn+\epialpn \\
	\epialpn \end{array} \right)_{\comp(\ti^{\comp}_i-i)}
	\]
	then similarly:
	\begin{align*}
	 \Y^{(2)}_{\comp\ti^\comp_i} &=\left(\begin{array}{cc}\In & \On \end{array}\right)\X_{\comp\ti^{\comp}_i}-\left(\begin{array}{cc}\In & \On \end{array}\right)\X_{\comp(\ti^{\comp}_i-l_i^c)}+M_{\comp\ti^\comp_i}^{(2)} \\
	&= \left(\begin{array}{cc}\On & l_i^c \end{array}\right)\X_{\comp(\ti^{\comp}_i-l_i^c)}+\left(\begin{array}{cc}\In & k_i^c
	\end{array}\right)\E_{\comp(\ti^{\comp}_i-k_i^c,\ti^{\comp}_i-l_i^c)}\\
	&+\left(\begin{array}{cc}\In & \On \end{array}\right)\E_{\comp(\ti^{\comp}_i,\ti^{\comp}_i-k_i^c)}+M_{\comp\ti^\comp_i}^{(2)}
	\end{align*}
	Now to eliminate the effects of the wall thickness term we do $k_i^c\Y^{(2)}_{\comp\ti^\comp_i}-l_i^c\Y^{(1)}_{\comp\ti^\comp_i}$:
	\begin{align*}
	 k_i^c\Y^{(2)}_{\comp\ti^\comp_i}-l_i^c\Y^{(1)}_{\comp\ti^\comp_i} &= k_i^c\bigg[ \left(\begin{array}{cc}\On & l_i^c \end{array}\right)\X_{\comp(\ti^{\comp}_i-l_i^c)}+\left(\begin{array}{cc}\In & k_i^c \end{array}\right)\E_{\comp(\ti^{\comp}_i-k_i^c,\ti^{\comp}_i-l_i^c)}\\ &+\left(\begin{array}{cc}\In & \On \end{array}\right)\E_{\comp(\ti^{\comp}_i,\ti^{\comp}_i-k_i^c)}+M_{\comp\ti^\comp_i}^{(2)} \bigg]\\
	& -l_i^c \bigg[ \left(\begin{array}{cc}\On & k_i^c \end{array}\right)\X_{\comp(\ti^{\comp}_i-l_i^c)}+\left(\begin{array}{cc}\On & k_i^c \end{array}\right)\E_{\comp(\ti^{\comp}_i-k_i^c,\ti^{\comp}_i-l_i^c)} \\&+\left(\begin{array}{cc}\In & \On \end{array}\right)\E_{\comp(\ti^{\comp}_i,\ti^{\comp}_i-k_i^c)}+M_{\comp\ti^\comp_i}^{(1)}\bigg]\\
	&= \left(\begin{array}{cc} (k_i^c-l_i^c) & \On \end{array}\right)\E_{\comp(\ti^{\comp}_i,\ti^{\comp}_i-k_i^c)}+\left(\begin{array}{cc}
	k_i^c &
	k_i^c(k_i^c-l_i^c)\end{array}\right)\E_{\comp(\ti^{\comp}_i-k_i^c,\ti^{\comp}_i-l_i^c)}\\
	&+ k_i^c M_{\comp\ti^\comp_i}^{(2)}- l_i^c M_{\comp\ti^\comp_i}^{(1)}\\
	\end{align*}
	We have that:
	\[
	\E_{\comp(\ti^{\comp}_i,\ti^{\comp}_i-k_i^c)}=\sum_{i=0}^{k_i^c-1} \left(\begin{array}{cc}
	\In & i \\
	\On & \In \end{array}\right) \left( \begin{array}{c}
	\epixn+\epialpn \\
	\epialpn \end{array} \right)_{\comp\ti^{\comp}_i-i}
	\]
	So:
	\begin{eqnarray*}
	\var[\E_{\comp(\ti^{\comp}_i,\ti^{\comp}_i-k_i^c)}] &=& \sum_{i=0}^{k_i^c-1} \var \left[\left(\begin{array}{cc}
	\In & i  \\
	\On & \In \end{array}\right) \left( \begin{array}{c}
	\epixn+\epialpn \\
	\epialpn \end{array}
	\right)_{\comp\ti^{\comp}_i-i}  \right] \\
	&=& \sum_{i=0}^{k_i^c-1} \left(\begin{array}{cc}
	\In & i \\
	\On & \In \end{array}\right) \var\left( \begin{array}{c}
	\epixn+\epialpn \\
	\epialpn \end{array} \right)_{\comp\ti^{\comp}_i-i}\left(\begin{array}{cc}
	\In & \On \\
	i & \In \end{array}\right) \\
	&=& \sum_{i=0}^{k_i^c-1} \left(\begin{array}{cc}
	\In & i \\
	\On & \In \end{array}\right) \left( \begin{array}{cc}
	\musigX+\musigCR & \musigCR \\
	\musigCR & \musigCR \end{array} \right)\left(\begin{array}{cc}
	\In & \On \\
	i & \In \end{array}\right) \\
	&=& \left( \begin{array}{cc}
	k_i^c\musigX+\sum_{i=0}^{k_i^c-1}(1 +i)^2\musigCR & \sum_{i=0}^{k_i^c-1}(1+i) \musigCR \\
	\sum_{i=0}^{k_i^c-1}(1+i)\musigCR & k_i^c\musigCR \end{array}
	\right)\\
	&=&\left(
	\begin{array}{cc}
	k_i^c  +\frac{1}{6}k_i^c(k_i^c+1)(2k_i^c+1) \lambda & \frac{1}{2} k_i^c(k_i^c+1) \lambda \\
	\frac{1}{2} k_i^c(k_i^c+1) \lambda & k_i^c \lambda
	                         \end{array}
	\right)\musigX
	\end{eqnarray*}
	Then:
	\begin{center}
	\small
	\begin{align*}
	E[ (k_i^c\Y^{(2)}_{\comp\ti^\comp_i}-l_i^c\Y^{(1)}_{\comp\ti^\comp_i})^2]&= (k_i^c-1)^2 [ k_i^c + \frac{\lambda k_i^c (k_i^c+1) (2k_i^c+1)}{6}] \musigX \\
	&+ (k_i^c)^2 \left[l_i^c - k_i^c + \frac{\lambda (l_i^c-k_i^c) (l_i^c-k_i^c+1) ( 2(l_i^c-k_i^c) +1)}{6} \right] \musigX \\
	&+ 2(k_i^c)^2 \left[k_i^c - l_i^c + \frac{\lambda (l_i^c-k_i^c)(l_i^c-k_i^c +1)}{2} \right] \musigX \\
	&+ (k_i^c)^2 (k_i^c-l_i^c)^2(l_i^c-k_i^c) \lambda \musigX \\
	&+ (l_i^c)^2 E[(M_{\comp\ti^\comp_i}^{(1)})^2] +(k_i^c)^2 E[(M_{\comp\ti^\comp_i}^{(2)})^2]-2k_i^c l_i^c E[M_{\comp\ti^\comp_i}^{(1)}M_{\comp\ti^\comp_i}^{(2)}]\\
	&= \frac{k_i^c l_i^c (k_i^c -l_i^c) (2 \lambda (k_i^c)^2 -2\lambda l_i^c -6 -\lambda)}{6} \musigX \\
	&+ (l_i^c)^2 E[(M_{\comp\ti^\comp_i}^{(1)})^2] +(k_i^c)^2 E[(M_{\comp\ti^\comp_i}^{(2)})^2]-2k_i^c l_i^c E[M_{\comp\ti^\comp_i}^{(1)}M_{\comp\ti^\comp_i}^{(2)}]\\
	\end{align*}
	\end{center}

	\subsection{$ \cov [\M{W_\X}, \Dbar]$} \label{cov}
	\[\cov [\M{W_\X}, \Dbar]=\left(\cov [\M{W_\X}, \Dbar_1],\cov [\M{W_\X}, \Dbar_2], \dots,\cov [\M{W_\X}, \Dbar_n] \right)\]
	\begin{align*}
	\cov [\M{W_\X}, \Dbar_\comp]	&= \sum_{i=3}^{T_\comp}\frac{1}{K_i} \cov \left[ \M{W_\X},  ( k_i^c Y_{\comp\ti^\comp_i}^{(2)}-l_i^c Y_{\comp\ti^\comp_i}^{(1)})^2  \right]
	\end{align*}
	from \ref{Edbar} we know that:
	\begin{center}
	\small
	\begin{align*}( k_i^c Y_{\comp\ti^\comp_i}^{(2)}-l_i^c Y_{\comp\ti^\comp_i}^{(1)})^2&= \bigg[ \left(\begin{array}{cc} (k_i^c-l_i^c) & \On \end{array}\right)\E_{\comp(\ti^{\comp}_i,\ti^{\comp}_i-k_i^c)}+\left(\begin{array}{cc} k_i^c &
	k_i^c(k_i^c-l_i^c)
	\end{array}\right)\E_{\comp(\ti^{\comp}_i-k_i^c,\ti^{\comp}_i-l_i^c)}\\
	&+ k_i^c M_{\comp\ti^\comp_i}^{(2)}- l_i^c M_{\comp\ti^\comp_i}^{(1)}\bigg]^2
	\end{align*}
	\end{center}
	which means:
	\begin{center}
	\footnotesize
	\begin{align*}
	\cov \left[ \M{W_\X},  ( k_i^c Y_{\comp\ti^\comp_i}^{(2)}-l_i^c Y_{\comp\ti^\comp_i}^{(1)})^2  \right] &=  \cov \left[ \M{W_\X}, \left(  \left(\begin{array}{cc} (k_i^c-l_i^c) & \On \end{array}\right)\E_{\comp(\ti^{\comp}_i,\ti^{\comp}_i-k_i^c)} \right)^2  \right] \\
	& + \cov \left[ \M{W_\X}, \left(  \left(\begin{array}{cc} k_i^c & k_i^c(k_i^c-l_i^c)
	\end{array}\right)\E_{\comp(\ti^{\comp}_i-k_i^c,\ti^{\comp}_i-l_i^c)}\right)^2 \right ]
	\end{align*}
	\end{center}
	we will consider:
	
	A:
	\[\cov \left[ \M{W_\X}, \left(  \left(\begin{array}{cc} (k_i^c-l_i^c) & \On \end{array}\right)\E_{\comp(\ti^{\comp}_i,\ti^{\comp}_i-k_i^c)} \right)^2  \right]\]
	B:
	\[\cov \left[ \M{W_\X}, \left(  \left(\begin{array}{cc} k_i^c &
	k_i^c(k_i^c-l_i^c)
	\end{array}\right)\E_{\comp(\ti^{\comp}_i-k_i^c,\ti^{\comp}_i-l_i^c)}\right)^2 \right ] \]
	%
	A:
	\begin{center}
	\footnotesize
	\begin{align*}
	&\cov \left[ \M{W_\X}, \left(  \left(\begin{array}{cc} (k_i^c-l_i^c) & \On \end{array}\right)\E_{\comp(\ti^{\comp}_i,\ti^{\comp}_i-k_i^c)}
	\right)^2  \right]\\
	&= \cov \left[ \M{W_\X}, \left(  \sum_{j=0}^{k_i^c-1} \left(\begin{array}{cc} (k_i^c-l_i^c) & \On
	\end{array}\right) \left(\begin{array}{cc}
	\In & j \\
	\On & \In \end{array}\right) \left( \begin{array}{c}
	\epixn+\epialpn \\
	\epialpn \end{array} \right)_{\comp\ti^{\comp}_i-j}\right)^2  \right]\\
	&= \cov \left[ \M{W_\X},  \sum_{j=0}^{k_i^c-1}\sum_{j'=0}^{k_i^c-1} \begin{array}{l} (k_i^c-l_i^c)^2\epix{\comp \ti^{\comp}_i-j}\epix{\comp \ti^{\comp}_i-j'}\\+(j'+1)(k_i^c-l_i^c)^2 \epix{\comp \ti^{\comp}_i-j}\epialp{\comp \ti^{\comp}_i-j'}+\\ +(j'+1)(k_i^c-l_i^c)^2 \epix{\comp \ti^{\comp}_i-j}\epialp{\comp \ti^{\comp}_i-j'} +\\+ (j'+1)(j+1)(k_i^c-l_i^c)^2 \epialp{\comp \ti^{\comp}_i-j}\epialp{\comp \ti^{\comp}_i-j'} \end{array} \right] \\
	&=  (k_i^c-l_i^c)^2 \sum_{j=0}^{k_i^c-1} \cov \left[ \M{W_\X},  \epix{\comp \ti^{\comp}_i-j}^2 \right]\\
	&+  (k_i^c-l_i^c)^2 \sum_{j=0}^{k_i^c-1} (j+1)^2 \cov \left[\M{W_\X}, \epialp{\comp \ti^{\comp}_i-j}^2 \right]\\
	&=  (k_i^c-l_i^c)^2 \sum_{j=0}^{k_i^c-1} \cov \left[ \M{W_\X}, \M{W_\CR}+\Res{\comp}{W_\CR}+\Res{\ti}{V_{\CR\comp}}  \right]\\
	&+  (k_i^c-l_i^c)^2 \sum_{j=0}^{k_i^c-1} (j+1)^2 \cov \left[\M{W_\X},  \M{W_\CR}) \right]\\
	&= k_i^c(k_i^c-l_i^c)^2 \covsigX+ \frac{\lambda}{6} k_i^c(k_i^c+1)(2k_i^c+1)(k_i^c-l_i^c)^2 \covsigX
	\end{align*}
	\end{center}
	%
	B:
	\begin{center}
	\footnotesize
	\begin{align*}
	&\cov \left[ \M{W_\X}, \left(  \left(\begin{array}{cc} k_i^c & k_i^c(k_i^c-l_i^c)
	\end{array}\right)\E_{\comp(\ti^{\comp}_i-k_i^c,\ti^{\comp}_i-l_i^c)}\right)^2 \right ]\\
	&= \cov \left[ \M{W_\X}, \left(\sum_{j=0}^{l_i^c-k_i^c-1}  \left(\begin{array}{cc} k_i^c & k_i^c(k_i^c-l_i^c)
	\end{array}\right)\left(\begin{array}{cc}
	\In & j \\
	\On & \In \end{array}\right) \left( \begin{array}{c}
	\epixn+\epialpn \\
	\epialpn \end{array} \right)_{\comp\ti^{\comp}_i-k_i^c-j}\right)^2 \right ]\\
	&= \cov \left[ \M{W_\X}, \sum_{j=0}^{l_i^c-k_i^c-1} \sum_{j'=0}^{l_i^c-k_i^c-1}  \begin{array}{l}\left(k_i^c
	\epix{\comp\ti^{\comp}_i-k_i^c-j} + k_i^c(j+1+k_i^c-l_i^c) \epialp{\comp\ti^{\comp}_i-k_i^c-j}\right)\\ \times \left(k_i^c
	\epix{\comp\ti^{\comp}_i-k_i^c-j'} + k_i^c(j'+1+k_i^c-l_i^c) \epialp{\comp\ti^{\comp}_i-k_i^c-j'}\right) \end{array}
	\right ]\\
	&= (l_i^c-k_i^c) (k_i^c)^2 \covsigX + \cov \left[ \M{W_\X}, (k_i^c)^2 \sum_{j=0}^{l_i^c-k_i^c-1} \begin{array}{l}j^2 \epialp{\comp\ti^{\comp}_i-k_i^c-j}^2\\ +2j(k_i^c-l_i^c) \epialp{\comp\ti^{\comp}_i-k_i^c-j}^2 \\ +(k_i^c-l_i^c) \epialp{\comp\ti^{\comp}_i-k_i^c-j}^2\end{array}\right ]\\
	&= (l_i^c-k_i^c) (k_i^c)^2 \covsigX + \frac{\lambda (k_i^c)^2 (l_i^c-k_i^c)(l_i^c-k_i^c+1)(2(l_i^c-k_i^c)+1) \covsigX }{6}\\
	& + \frac{2\lambda (k_i^c)^2(l_i^c-k_i^c)(l_i^c-k_i^c+1)(k_i^c-l_i^c)\covsigX}{2} +\lambda (k_i^c)^2 (k_i^c-l_i^c)^2(l_i^c-k_i^c) \covsigX
	\end{align*}
	\end{center}
	So:
	\begin{eqnarray*}
	 \cov \left[ \M{W_\X},  ( k_i^c Y_{\comp\ti^\comp_i}^{(2)}-l_i^c Y_{\comp\ti^\comp_i}^{(1)})^2  \right]&=& \textrm{A.}+\textrm{B.}\\
	&=& \frac{k_i^c l_i^c (k_i^c -l_i^c) (2 \lambda (k_i^c)^2 -2\lambda l_i^c -6 -\lambda)\covsigX}{6}\\
	&=& K_i \covsigX
	\end{eqnarray*}
	Therefore:
	\begin{eqnarray*}
	\cov [\M{W_\X}, \Dbar_\comp]&=&  \sum_{i=3}^{T_c} \frac{K_i}{K_i} \covsigX \\
	&=& (T_c-2)\covsigX
	\end{eqnarray*}
	and:
	\[ \cov [\M{W_\X}, \Dbar ] = \left(\begin{array}{cccc} \covsigX & \covsigX &\dots & \covsigX \end{array}\right)\]

	\section{Prior values for offshore structure application}
	 \label{priorreal}
	
	\begin{center}
	\small
	\begin{tabular}{lll}
	number of components & $N$ & $64$ \\
	number of time points & $T$ &$ 83$ \\
	total number of inspections &  & $174$ \\
	minimum level of correlation between components & $\rho_0$ & $0.2$\\
	distance effect scaling parameter& $\rho_D$ & $0.3$\\
  circuit correlation scaling parameter& $\rho_C$ & $0.5$\\
	ratio local to general corrosion &$\lambda$& $0.02$ \\
	measurement error variance & $\Sigy $ &$0.16^2$\\
	prior mean for $\M{W_\X}$ & $\musigX$ &$0.1^2$\\
	prior variance for $\M{W_\X}$ & $ \Sigma_{W_\X} $ & 1e-3\\
	prior covariance for $\M{W_\X}$ & $ \Gamma_{W_\X} $ & 5e-4
	\end{tabular}
	\end{center}
	\end{appendix}
	\end{document}